\documentclass[journal]{IEEEtran}
\IEEEoverridecommandlockouts
\usepackage{bbding}
\usepackage{cite}
\usepackage{amsmath,amssymb,amsfonts}
\usepackage{amsthm}
\usepackage{algorithmic}
\usepackage{algorithm}
\usepackage{graphicx}
\usepackage{textcomp}
\usepackage{xcolor}
\usepackage{booktabs}
\usepackage{multirow}
\usepackage{makecell}
\usepackage{url}
\newtheorem{theorem}{Theorem}
\newtheorem{proposition}{Proposition}
\newtheorem{corollary}{Corollary}
\newtheorem{assumption}{Assumption}
\newtheorem{remark}{Remark}

\floatname{algorithm}{Algorithm}

\def\BibTeX{{\rm B\kern-.05em{\sc i\kern-.025em b}\kern-.08em
    T\kern-.1667em\lower.7ex\hbox{E}\kern-.125emX}}

\usepackage{tikz}
\usepackage{hyperref}
\newcommand{\eb}{\bar{\varepsilon}}
\newcommand{\pmess}{p_m^{\mathrm{ESS}}}
\definecolor{lime}{HTML}{A6CE39}
\DeclareRobustCommand{\orcidicon}{
	\begin{tikzpicture}
		\draw[lime, fill=lime] (0,0)
		circle[radius=0.16]
		node[white]{{\fontfamily{qag}\selectfont \tiny \.{I}D}}; 
	\end{tikzpicture}
	\hspace{-2mm}
}
\foreach \x in {A, ..., Z}{%
	\expandafter\xdef\csname orcid\x\endcsname{\noexpand\href{https://orcid.org/\csname orcidauthor\x\endcsname}{\noexpand\orcidicon}}
}

\begin{document}

\title{Securing Cooperative Sensing in UAV Swarms Against Conformity-Driven Byzantine Attacks}

	\author{Ruixing~Ren\hspace{-1.5mm}\orcidA{}, Junhui~Zhao\hspace{-1.5mm}\orcidB{},~\IEEEmembership{Senior~Member,~IEEE},  Qiuping~Li\hspace{-1.5mm}\orcidD{}, He~Fang\hspace{-1.5mm}\orcidC{},~\IEEEmembership{Member,~IEEE},\\ Jiamin~Li\hspace{-1.5mm}\orcidE{},~\IEEEmembership{Member,~IEEE}, and Dongming~Wang\hspace{-1.5mm}\orcidG{},~\IEEEmembership{Member,~IEEE},
	\thanks{This work was supported in part by National Natural Science Foundation of China under Grant U25B2007, in part by the the Fundamental Research Funds for the central Universities under Grant 2025JBZX060, in part by National Engineering Research Center of System Technology for High-Speed	Railway and Urban Rail Transit under Grant 2024YJ255. (Corresponding author: Junhui Zhao.)}%
	\thanks{Ruixing Ren, Junhui Zhao are with the School of Electronic and Information Engineering, Beijing Jiaotong University, Beijing 100044, China. (e-mail: renruixing0604@163.com; junhuizhao@hotmail.com)
		
	Qiuping Li is with the National Computer Network Emergency Response Technical Team/Coordination Center of China (CNCERT/CC), Beijing 100029, China (e-mail: qiupingli\_bj@163.com).
		
	He Fang is with the College of Computer and Cyber Security, Fujian Normal University, Fuzhou 200240, China (e-mail: fanghe@fjnu.edu.cn).
		
	Jiamin Li, and Dongming Wang are with the National Mobile Communications Research Laboratory, Southeast University, Nanjing 210096, China, and also with Purple Mountain Laboratories, Nanjing 211111, China (e-mail: jiaminli@seu.edu.cn; wangdm@seu.edu.cn).
		
	}
	
}

\maketitle

\begin{abstract}
In integrated sensing and communication (ISAC)-enabled 6G unmanned aerial vehicle (UAV) swarm networks, the widely adopted imitation-based conformity cooperation mechanism can be exploited by Byzantine attackers to fabricate false consensus, causing the effective error probability of normal UAVs to evolve dynamically and far exceed their inherent sensing errors, which invalidates conventional fusion methods built on the independence assumption. This paper proposes a conformity-aware Byzantine-resilient fusion framework that couples evolutionary game theory with maximum a posteriori (MAP) estimation. First, the strategy updates of normal UAVs are characterized by bounded-rational opinion dynamics, and the evolution dynamics of the misinformation ratio together with its evolutionarily stable state (ESS) are derived under death-birth updating. Three theoretical results are then established: under heterogeneous per-node sensing errors, the zeroth-order ESS depends on the error distribution only through its mean; a closed-form first-order weak-selection correction to the ESS is obtained, together with an exact mean-field fixed point valid for arbitrary selection intensity; and it is revealed that swarm-level misinformation can overwhelm the majority if and only if the attack probability exceeds one half, with this threshold independent of both the sensing error and the malicious ratio. Embedding the predicted error dynamics into a per-node MAP rule, the resulting fusion mechanism achieves nearly 100\% situation-inference accuracy under different network topologies, attack intensities, network scales, and sensing-error distributions, and maintains accuracy above 99\% under $\pm20\%$ parameter mismatch. In contrast, majority voting, reputation weighting, and independent fusion collapse completely once the majority-flip threshold is crossed.
\end{abstract}

\begin{IEEEkeywords}
Integrated sensing and communication (ISAC); 6G networks; UAV swarm; Byzantine attack; cooperative sensing; evolutionary game theory; maximum a posteriori estimation
\end{IEEEkeywords}

\section{Introduction}
Low-altitude airspace is rapidly evolving into a densely interconnected economic and operational space, in which UAVs undertake reconnaissance, monitoring, delivery, and communication-relay missions \cite{RenPC,AirGuard}. IMT-2030 lists resilience, trustworthiness, and ultra-reliable low-latency communication as first-tier design objectives for 6G \cite{ITU2030}. Integrated sensing and communication (ISAC), as one of its flagship enabling technologies, allows UAVs to exchange data and simultaneously sense the environment over the same 6G link \cite{ISACSurvey,Yao,RenTVTIoV}; mission-critical applications such as emergency response and intelligent transportation impose stringent reliability and latency requirements on the resulting situation inference \cite{RenUAV,RenITS,RenIOTJ}.

A single UAV is constrained by endurance, sensing coverage, and fault tolerance \cite{RenUAV}. Swarms commonly compensate for these limitations through \emph{local conformity}, whereby each UAV iteratively aligns its decisions with those of its neighbors: nodes repeatedly exchange and weight one-bit decisions, which can provably approach centralized detection performance \cite{QCons,CSS}, and such mechanisms have been embedded in many cooperative sensing protocols.

The realization of conformity, however, is not unique. Narrow-sense quantized consensus \cite{QCons} is centered on unbiased aggregation, in which normal nodes perform no strategy selection; whereas in bio-inspired imitation-based coordination \cite{Vicsek,MDCons,DB,OhtsukiNature} and in some consensus-enhanced protocols, normal nodes iteratively switch among candidate behaviors using the fitness of their neighbors as a reference. This paper focuses on the latter, namely imitation-based conformity; the robustness of the former is determined by aggregation weights and voting thresholds and lies outside the scope of this paper.

Imitation-based conformity likewise faces Byzantine threats. Attackers can forge data or sabotage the consensus process \cite{DFalse,PrivCons,SwarmSpoof}, and in ISAC-6G swarms, where sensing and communication share the same open wireless medium \cite{Yao}, this exposure is further amplified. Classical models assume that normal nodes err independently with a fixed probability while attackers actively flip their reports, so that the fusion center can suppress attacks via statistical anomaly detection, majority voting, or reputation weighting \cite{WSPRT,BetaCSS}.

These defenses overlook a crucial fact: under conformity rules, the reporting errors of normal UAVs are neither fixed nor independent. When normal nodes tend to stay consistent with their neighbors, malicious UAVs can fabricate false consensus and lure them, round after round, into adopting the wrong strategy, so that the effective error probability \emph{evolves dynamically} and may far exceed the inherent sensing error. A fusion center that still assumes a constant error probability suffers severe model mismatch, directly jeopardizing the high-reliability and low-latency objectives of ISAC-enabled 6G swarms.

Evolutionary game theory (EGT) provides a natural language for describing such phenomena: boundedly rational individuals repeatedly adjust their strategies through local interactions. Death-birth (DB) updating on graphs has been systematically studied in the evolution of cooperation \cite{DB,OhtsukiNature,NowakFive} and has been extended to model opinion and rumor spreading in social networks \cite{LinPhD,Centola}, but it has not yet been used to analyze the coupling between imitation-based conformity and Byzantine attacks in ISAC-enabled UAV swarms. Taking the Ohtsuki--Nowak framework \cite{DB} as the starting point and following the modeling philosophy of \cite{LinPhD}, this paper constructs an analysis and fusion methodology for cooperative sensing in UAV swarms, and develops a set of new theoretical and algorithmic results. The main contributions are as follows.
\begin{itemize}
	\item Using graph evolutionary game theory to model the process by which normal UAVs are progressively misled by malicious neighbors, deriving the evolution dynamics and the ESS of the misinformation ratio, and establishing the majority-flip threshold for independent attackers of fixed intensity.
	\item Proving, under a mean-field neighborhood-statistics assumption, that with per-node heterogeneous sensing errors the zeroth-order ESS depends on the error distribution only through its mean, providing a basis for mean-field fusion design under realistic channel conditions.
	\item Deriving a first-order closed-form correction to the transition probability together with the corresponding ESS shift, giving an exact mean-field fixed point valid for arbitrary selection intensity, and revealing the structural amplification effect on graphs.
	\item Embedding the predicted error dynamics into a per-node MAP rule to track online the true error probability of normal reports; nearly 100\% accuracy is achieved under different topologies, attack intensities, scales, and error distributions, and accuracy remains above 99\% under $\pm20\%$ parameter mismatch.
\end{itemize}

The remainder of this paper is organized as follows. Section \ref{Sec2} reviews related work; Section \ref{Sec3} presents the system architecture and the evolutionary-game formulation; Section \ref{Sec4} derives the evolution dynamics and the ESS; Section \ref{Sec5} develops the theoretical extensions; Section \ref{Sec6} designs the conformity-aware MAP fusion mechanism; Section \ref{Sec7} reports the simulation results; and Section \ref{Sec8} concludes the paper.

\section{Related Work}\label{Sec2}
This section reviews related work along four threads: UAV swarm conformity control, consensus-based cooperative sensing and distributed detection, Byzantine-resilient decision fusion, and applications of evolutionary game theory to swarm security. In ISAC-enabled 6G networks, sensing and communication share wireless resources to support highly reliable, low-latency services \cite{RenITS,RenIOTJ}, and their security and resilience have become key issues for IMT-2030 \cite{ITU2030}. However, existing research mostly focuses on waveform design, resource allocation, and physical-layer reliability, with little attention paid to swarm-level error propagation induced by the malicious exploitation of local conformity rules in cooperative sensing. This paper takes precisely this network-centric security perspective.

The Vicsek model pioneered the description of individuals adjusting their motion according to the average direction of their neighbors \cite{Vicsek}, spawning a large body of conformity-control research; the introduction of molecular-dynamics potentials further improved the efficiency and stability of swarm conformity \cite{MDCons}. To cope with non-ideal communication, improved consensus protocols, adaptive algorithms, and event-triggered schemes still enable effective conformity in UAV swarms \cite{Kucherov,HanCons}; at the mission level, conformity theory has also been applied to swarm encirclement without preset formations \cite{WenSwarm} and to robust trajectory prediction under saturation attacks \cite{ShangPred}. All of these works assume that the nodes are cooperative and honest, and do not consider the security risk that malicious nodes may exploit the conformity rule to mislead normal nodes.

It should be clarified that the quantized consensus studied in \cite{QCons,CSS} is centered on unbiased aggregation, in which normal nodes act as passive information weighters; this paper instead models imitation updates at the strategy level. The two mechanisms are fundamentally different: the conclusions of this paper directly characterize the latter, while the former lies outside its coverage.

Byzantine attacks allow an adversary to arbitrarily tamper with the transmitted data, and in severe cases can prevent consensus from converging at all \cite{BlockchainTrust}. Position-spoofing attacks can be detected using neighborhood-proximity features \cite{SpoofDetect}, but such methods treat the attack as exogenous and assume that normal nodes are always honest. Recent work has designed a variety of Byzantine-resilient consensus protocols: digital-twin hierarchical defenses for heterogeneous swarms \cite{CuiResilient}; digital-twin and human-in-the-loop hierarchical control \cite{GongHIL}; two-layer security protocols under composite attacks \cite{GongTwoLayer}; event-triggered control for hybrid air--sea swarms under DoS \cite{LiAirSea}; zero-sum-game self-triggered control under intermittent communication/DoS \cite{WuZeroSum}; and privacy-preserving consensus under DoS \cite{PrivCons}. All of these methods assume that normal nodes are either fully honest or err with a fixed independent probability, and none models the dynamic evolution of reporting behavior driven by local conformity. When swarm-level error propagation occurs, a fusion center assuming a fixed error probability suffers severe mismatch. Reputation-based defenses represented by majority voting and reputation weighting \cite{WSPRT,BetaCSS} further update reputations using the global decision as the reference; once the majority is misled en masse, the reputation mechanism itself becomes contaminated and keeps rewarding misleading nodes.

EGT effectively characterizes the process by which individuals progressively adjust their strategies through local interactions, and has been applied to rumor spreading and malicious-information diffusion in social networks \cite{LinPhD,Centola}. The distributed decision-making pattern of UAV swarms fits naturally with the local-interaction premise of EGT. When normal UAVs tend to stay consistent with their neighbors' reports, malicious UAVs can fabricate false consensus and lure them away from the honest strategy. This process is not a one-shot random disturbance but a multi-round iterative strategy evolution, and quantitative modeling of this process and of its impact on fusion decisions has long been lacking. The graph evolutionary game framework used in this paper follows the bounded-rational opinion-security model of \cite{LinPhD}, redeploys it in ISAC-enabled UAV swarms, and extends it with results on heterogeneous robustness, weak selection, and fusion.

\begin{figure}[t]
	\centerline{\includegraphics[width=3.4in,keepaspectratio]{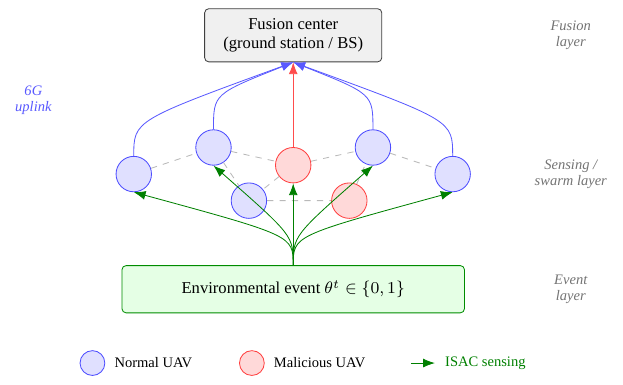}}
	\caption{Network architecture of ISAC-enabled 6G UAV swarm cooperative sensing.}
	\label{fig0}
\end{figure}

\section{System Architecture and Game Formulation}\label{Sec3}
The considered network is illustrated in Fig. \ref{fig0} and comprises a sensing layer, a 6G communication layer, and a fusion layer. The communication topology is an undirected connected graph $G=(\mathcal{V},\mathcal{E})$ formed by $N$ UAVs. At each discrete time $t$, every UAV senses a binary event state $\theta^t\in\{0,1\}$ (e.g., target present/absent, spectrum occupied/idle) using ISAC signals. Due to the limited echo SINR, it errs with probability $\varepsilon$, and its local decision $u_i^t$ satisfies $P(u_i^t\ne\theta^t)=\varepsilon$. After sensing, each UAV sends its report $r_i^t$ to the fusion center over the 6G uplink, while neighboring nodes simultaneously exchange reports through local interactions, constituting the conformity coupling inherent to the swarm; the fusion center aggregates all reports and infers the true situation.

In open airspace, some UAVs may be compromised into malicious nodes executing an attack strategy: besides forging observations and tampering with reports, more importantly they can exploit the local conformity rule followed by normal UAVs to drag them, round after round, away from their true perceptions, so that the effective error probability of normal reports evolves dynamically and far exceeds the inherent sensing error $\varepsilon$, threatening the highly reliable, low-latency situation inference promised by ISAC-6G networks.

EGT characterizes the process by which individuals iteratively adjust their strategies through local interactions, and fits the neighbor-based distributed decision-making pattern of UAV swarms. Following \cite{DB,LinPhD}, this paper models the interaction between a normal UAV and its neighbors (including malicious nodes) as an evolutionary game on a graph.

\textbf{Strategy space.} A normal UAV has two strategies: the normal strategy $S_n$, reporting the sensed value honestly, $r_i^t=u_i^t$; and the misleading strategy $S_m$, reporting the opposite value, $r_i^t=\bar{u}_i^t$. Adopting $S_m$ is not active malice but a wrong decision induced by neighbors (especially malicious nodes) under the conformity rule. Malicious UAVs adopt the attack strategy $S_a$, flipping their reports with probability $P_a$. Let $p_m$ denote the fraction of normal UAVs adopting $S_m$, and let $\beta$ denote the ratio of malicious UAVs to normal UAVs.

\textbf{Task payoffs.} When interacting with neighbors, a UAV earns task payoffs depending on whether the two parties' reports agree. A central UAV adopting $S_n$ earns $u_{ns}$ from each neighbor whose previous report agreed with its own, and $u_{nd}$ otherwise; when adopting $S_m$ it earns $u_{ms}$ and $u_{md}$, respectively. Cooperative tasks bias UAVs toward agreement, hence $u_{ns}>u_{ms}$ and $u_{nd}>u_{md}$; moreover, agreement yields higher payoffs than disagreement, i.e., $u_{ns}>u_{nd}$ and $u_{ms}>u_{md}$.

\textbf{Fitness.} Following \cite{LinPhD}, the fitness of a player is
\begin{equation}\label{eq:fitness}
	\pi=(1-\alpha)B+\alpha\cdot U,
\end{equation}
where $B$ is the baseline fitness capturing inherent attributes; individuals less susceptible to others have a larger $B$. Assuming a homogeneous network, all UAVs take $B=1$. The weak-selection coefficient $\alpha$ captures the conformity tendency of the UAVs and is generally very small. $U$ aggregates the total payoff from interactions with neighbors.

Since a UAV cannot know the true strategies of its neighbors for certain, it evaluates them using its own perception as the reference: a neighbor whose report agrees with its own perception is regarded as adopting $S_n$, and a disagreeing one as adopting $S_m$. This is consistent with the limited-information condition of distributed sensing.

\textbf{Strategy update rule.} Normal UAVs continuously adjust their strategies under the influence of their neighbors. Following \cite{LinPhD}, this paper adopts death-birth (DB) updating:
\begin{itemize}
	\item \emph{Death}: in each round, a randomly selected normal UAV abandons its current strategy;
	\item \emph{Birth}: it imitates the strategy of a neighbor with probability proportional to that neighbor's fitness.
\end{itemize}

\textbf{Evolutionarily stable state (ESS).} The ESS is the stable equilibrium of the evolutionary process: once reached, small behavioral perturbations are damped out and the system returns to equilibrium. In this paper, the system reaches the ESS when the misinformation ratio $p_m^t$ no longer changes, i.e.,
\begin{equation}
	\pmess:\quad \hat{p}_m^t = 0,
\end{equation}
where $\hat{p}_m^t$ denotes the time derivative of $p_m^t$.

\section{Evolutionary Dynamics and Stable State of the UAV Swarm}\label{Sec4}
Based on the setup of Section \ref{Sec3}, this section derives the evolutionary dynamics by which normal UAVs are progressively misled by malicious UAVs into erroneous strategies and tend toward a stable state; it further gives the exact mean-field transition probability valid for arbitrary selection intensity.

\subsection{State Transition Probability}\label{Sec4A}
Among the $k$ neighbors connected to a central UAV, suppose that $k_m$ neighbors choose to adopt strategy $S_m$\footnote{For malicious UAVs, this corresponds to maliciously flipped reports.} and have modified their outcomes accordingly, while $k_n=k-k_m$ neighbors choose to adopt strategy $S_n$. Conditional on an observation error probability $\varepsilon$, let $k^S_n$ denote the number of neighbors whose reports agree with the true system state, and $k^S_m=k-k^S_n$ the number whose reports disagree. It follows that
\begin{equation}\label{eq:kmS}
	k^S_m=(1-\varepsilon)k_m+\varepsilon k_n, k^S_n=(1-\varepsilon)k_n+\varepsilon k_m.
\end{equation}

In practice, a UAV also cannot determine whether its own observation contains an error. Therefore, the fitness function must be analyzed separately for adopting $S_n$ or $S_m$ when the observation is correct or erroneous. When the central UAV has no observation error, adopting the misleading strategy $S_m$, i.e., $r_i^t=\bar u_i^t\ne\theta^t$, yields the fitness
\begin{equation}\label{eq:piA}
	\pi_m^{(A)}=(1-\alpha)+\frac{\alpha}{k}\left[ k^S_m u_{ms}+ (k-k^S_m)u_{md}\right],
\end{equation}
while adopting the normal strategy $S_n$, i.e., $r_i^t=u_i^t=\theta^t$, yields
\begin{equation}\label{eq:piB}
	\pi_n^{(A)}=(1-\alpha)+\frac{\alpha}{k}\left[ k^S_n u_{ns}+ (k-k^S_n)u_{nd}\right].
\end{equation}
Similarly, when the central UAV has an observation error, the fitness functions are
\begin{equation}\label{eq:piC}
	\pi_m^{(B)}=(1-\alpha)+\frac{\alpha}{k}\left[ k^S_n u_{ms}+ (k-k^S_n)u_{md}\right],
\end{equation}
\begin{equation}\label{eq:piD}
	\pi_n^{(B)}=(1-\alpha)+\frac{\alpha}{k}\left[ k^S_m u_{ns}+ (k-k^S_m)u_{nd}\right],
\end{equation}
where the interaction terms take the average-payoff form, consistent with the fitness definition in \eqref{eq:fitness}.

According to the DB update rule, let $P_{n\to m}$ denote the probability that a UAV switches from the truthful strategy to the misleading strategy, whose numerator is the sum of the fitnesses of neighbors judged to adopt the strategy opposite to the true system state, and whose denominator is the total fitness of all neighbors, denoted by $\Omega$. When the central UAV has no observation error, this can be written as
\begin{equation}\label{eq:PA}
	P^{(A)}_{n\to m}=\frac{k_m(1-\varepsilon)\pi^{(A)}_m+k_n\varepsilon\pi_n^{(B)}}{\Omega},
\end{equation}
where
\begin{equation}\label{eq:Omega}
	\begin{aligned}
		\Omega&=\left[k_n(1-\varepsilon)\pi^{(A)}_n+k_m\varepsilon\pi_m^{(B)}\right]\\
		&\quad+\left[k_m(1-\varepsilon)\pi^{(A)}_m+k_n\varepsilon\pi_n^{(B)}\right].
	\end{aligned}
\end{equation}
When the central UAV has an observation error, it becomes
\begin{equation}\label{eq:PB}
	P^{(B)}_{n\to m}=\frac{k_n(1-\varepsilon)\pi^{(A)}_n+k_m\varepsilon\pi_m^{(B)}}{\Omega}.
\end{equation}
Combining \eqref{eq:PA} and \eqref{eq:PB} and taking the expectation over the central observation error, the probability that an arbitrary normal UAV in the network changes from the truthful strategy to the misleading strategy is
\begin{equation}\label{eq:Pnm}
	P_{n\to m}=(1-\varepsilon)P^{(A)}_{n\to m}+\varepsilon P^{(B)}_{n\to m}.
\end{equation}

To further analyze \eqref{eq:Pnm}, substitute \eqref{eq:piA}--\eqref{eq:piD} into \eqref{eq:PA}--\eqref{eq:PB} and then into \eqref{eq:Pnm}; after simplification one obtains
\begin{equation}\label{eq:Ppi}
	P_{n\to m}=\frac{\bar\varepsilon^2 k_m\pi_m^{(A)}+\varepsilon^2 k_m\pi_m^{(B)}+\varepsilon\bar\varepsilon k_n\pi_n^{(B)}+\varepsilon\bar\varepsilon k_n\pi_n^{(A)}}{\bar\varepsilon k_n\pi_n^{(A)}+\varepsilon k_n\pi_n^{(B)}+\bar\varepsilon k_m\pi_m^{(A)}+\varepsilon k_m\pi_m^{(B)}},
\end{equation}
where $\bar\varepsilon=1-\varepsilon$. Note that \eqref{eq:piA}--\eqref{eq:piD} all contain the common term $1-\alpha$ followed by a payoff-dependent term, denoted by $\Psi_i^{(X)}$. They can thus be written uniformly as
\begin{equation}\label{eq:piPsi}
	\pi_i^{(X)}=(1-\alpha)+\alpha\cdot\Psi_i^{(X)},
\end{equation}
and substituting this into the numerator of \eqref{eq:Ppi} gives
\begin{equation}
	\begin{aligned}
		&\bar\varepsilon^2 k_m\left[(1-\alpha)+\alpha\Psi_m^{(A)}\right]
		+\varepsilon^2 k_m\left[(1-\alpha)+\alpha\Psi_m^{(B)}\right]\\
		&\quad+\varepsilon\bar\varepsilon k_n\left[(1-\alpha)+\alpha\Psi_n^{(B)}\right]
		+\varepsilon\bar\varepsilon k_n\left[(1-\alpha)+\alpha\Psi_n^{(A)}\right].
	\end{aligned}
\end{equation}
Separating the $(1-\alpha)$ terms from the $\alpha$ terms and collecting like terms yields
\begin{equation}
	(1-\alpha)\left(\bar\varepsilon^2 k_m+\varepsilon^2 k_m+2\varepsilon\bar\varepsilon k_n\right)+\alpha\cdot\Phi_1,
\end{equation}
where
\begin{equation}\label{eq:Phi1}
	\Phi_1=\bar\varepsilon^2 k_m\Psi_m^{(A)}+\varepsilon^2 k_m\Psi_m^{(B)}+\varepsilon\bar\varepsilon k_n\Psi_n^{(B)}+\varepsilon\bar\varepsilon k_n\Psi_n^{(A)}
\end{equation}
is a composite term depending on the payoff parameters. Similarly, applying the same operations to the denominator finally gives
\begin{equation}
	(1-\alpha)\left(\bar\varepsilon k_n+\varepsilon k_n+\bar\varepsilon k_m+\varepsilon k_m\right)+\alpha\cdot\Phi_2,
\end{equation}
\begin{equation}\label{eq:Phi2}
	\Phi_2=\bar\varepsilon k_n\Psi_n^{(A)}+\varepsilon k_n\Psi_n^{(B)}+\bar\varepsilon k_m\Psi_m^{(A)}+\varepsilon k_m\Psi_m^{(B)}.
\end{equation}
Combining the two, \eqref{eq:Ppi} can be consolidated as
\begin{equation}\label{eq:Pexpanded}
	P_{n\to m}=\frac{(1-\alpha)\left(\bar\varepsilon^2 k_m+\varepsilon^2 k_m+2\varepsilon\bar\varepsilon k_n\right)+\alpha\Phi_1}{(1-\alpha)k+\alpha\Phi_2},
\end{equation}
where in the denominator $\bar\varepsilon k_n+\varepsilon k_n+\bar\varepsilon k_m+\varepsilon k_m=(\bar\varepsilon+\varepsilon)(k_n+k_m)=1\cdot k=k$.

Since $\alpha$ is very small, the following approximate expansion holds:
\begin{equation}\label{eq:fracexp}
	\frac{1+\mu\alpha}{1+\nu\alpha}=1+(\mu-\nu)\alpha+O(\alpha^2).
\end{equation}
Dividing both the numerator and the denominator of \eqref{eq:Pexpanded} by $(1-\alpha)k$ puts the denominator into the form $1+\nu\alpha$:
\begin{equation}
	P_{n\to m}=\frac{\frac{\bar\varepsilon^2 k_m+\varepsilon^2 k_m+2\varepsilon\bar\varepsilon k_n}{k}+\frac{\alpha}{(1-\alpha)k}\Phi_1}{1+\frac{\alpha\Phi_2}{(1-\alpha)k}}.
\end{equation}
Because $\alpha$ is small, $\frac{1}{1-\alpha}\approx1+O(\alpha)$, so the numerator can be written compactly as
\begin{equation}
	\underbrace{\frac{(\bar\varepsilon^2+\varepsilon^2)k_m+2\varepsilon\bar\varepsilon k_n}{k}}_{A_0}
	+\underbrace{\frac{\Phi_1}{k}}_{\mu}\cdot\alpha+O(\alpha^2),
\end{equation}
and the denominator as
\begin{equation}
	1+\frac{\alpha\Phi_2}{(1-\alpha)k}\approx1+\underbrace{\frac{\Phi_2}{k}}_{\nu}\cdot\alpha+O(\alpha^2).
\end{equation}
Therefore,
\begin{equation}
	\begin{aligned}
		P_{n\to m}&=\frac{A_0+\mu\alpha+O(\alpha^2)}{1+\nu\alpha+O(\alpha^2)}
		=\frac{A_0\left(1+\frac{\mu}{A_0}\alpha\right)}{1+\nu\alpha}+O(\alpha^2)\\
		&=A_0\left[1+\left(\frac{\Phi_1}{A_0 k}-\frac{\Phi_2}{k}\right)\alpha\right]+O(\alpha^2)\\
		&=A_0+\alpha\left(\frac{\Phi_1}{k}-A_0\cdot\frac{\Phi_2}{k}\right)+O(\alpha^2).
	\end{aligned}
\end{equation}
Using $k_n=k-k_m$ and $\bar\varepsilon=1-\varepsilon$, the first term simplifies to give
\begin{equation}\label{eq:P0}
	P_{n\to m}=2\varepsilon(1-\varepsilon)+(1-2\varepsilon)^2\frac{k_m}{k}+\alpha\cdot\varpi(k_m)+O(\alpha^2),
\end{equation}
where
\begin{equation}\label{eq:varpi}
	\varpi(k_m)=\frac{\Phi_1}{k}-\frac{(\bar\varepsilon^2+\varepsilon^2)k_m+2\varepsilon\bar\varepsilon k_n}{k}\cdot\frac{\Phi_2}{k}
\end{equation}
collects the payoff-dependent first-order terms. Equation \eqref{eq:P0} is exact in $\alpha$ up to first order: its zeroth-order term is used to derive the population evolution equation in the next subsection, and the closed form of the first-order term is given in Section \ref{Sec5}.

\subsection{Population Evolution Equation}\label{Sec4B}
Recalling the definitions above, under the large-scale network assumption the fraction of normal UAVs adopting the misleading strategy is $p_m$, and the fraction of malicious neighbors adopting the attack strategy is $P_a$. For an arbitrary normal UAV, its $k$ neighbors include $k_o$ ordinary neighbors, so the number of malicious neighbors is $k_b=\beta k_o$. Hence the expected total number of neighbors adopting the misleading strategy (or the attack strategy, both of which manifest as flipped reports) is
\begin{equation}
	\mathbb{E}[k_m]=\mathbb{E}[k_{o,m}]+\mathbb{E}[k_{b,m}]=k_o p_m+k_b P_a=k_o(p_m+\beta P_a).
\end{equation}
The expected total number of neighbors is $\mathbb{E}[k]=k_o(1+\beta)$, so
\begin{equation}\label{eq:rho}
	\mathbb{E}\!\left[\frac{k_m}{k}\right]=\frac{p_m+\beta P_a}{1+\beta}=:\rho_s.
\end{equation}

According to the DB update rule, the rate of change of the misleading fraction $p_m$ satisfies \cite{DB}
\begin{equation}\label{eq:dpdt}
	\hat{p}_m=\mathbb{E}\left[P_{n\to m}\right]-p_m.
\end{equation}
Substituting \eqref{eq:P0} into \eqref{eq:dpdt} and taking the expectation over the neighbor distribution gives
\begin{equation}
	\begin{aligned}
		\hat{p}_m&=\mathbb{E}\!\left[2\varepsilon(1-\varepsilon)+(1-2\varepsilon)^2\frac{k_m}{k}+\alpha\cdot\varpi(k_m)\right]\\
		&\quad-p_m+O(\alpha^2).
	\end{aligned}
\end{equation}
Substituting \eqref{eq:rho} yields
\begin{equation}
	\begin{aligned}
		\hat{p}_m&=2\varepsilon(1-\varepsilon)+(1-2\varepsilon)^2\frac{p_m+\beta P_a}{1+\beta}-p_m\\
		&\quad+\alpha\,\mathbb{E}[\varpi(k_m)]+O(\alpha^2).
	\end{aligned}
\end{equation}

Since $\alpha\ll1$ is the weak-selection coefficient and $\mathbb{E}[\varpi(k_m)]$ is a bounded constant determined by the model parameters, $\alpha\mathbb{E}[\varpi(k_m)]$ is a higher-order small quantity. In the limit $\alpha\to0$ this term vanishes, and the main contribution to the system evolution comes from the zeroth-order terms. In the subsequent analysis this paper adopts the zeroth-order approximation, i.e., ignoring $\alpha$ and retaining only the $O(1)$ part. This gives the zeroth-order evolution equation
\begin{equation}\label{eq:zeroth}
	\begin{aligned}
		\hat{p}_m&\approx 2\varepsilon(1-\varepsilon)+(1-2\varepsilon)^2\frac{p_m+\beta P_a}{1+\beta}-p_m\\
		&=-p_m\left[1-\frac{(1-2\varepsilon)^2}{1+\beta}\right]+2\varepsilon(1-\varepsilon)+\frac{(1-2\varepsilon)^2\beta P_a}{1+\beta}.
	\end{aligned}
\end{equation}
This is a linear differential equation in $p_m$, which can be written in the standard linear form
\begin{equation}
	\hat{p}_m=Ap_m+B,
\end{equation}
\begin{equation}\label{A}
	A=-1+\frac{(1-2\varepsilon)^2}{1+\beta},
\end{equation}
\begin{equation}\label{B}
	B=2\varepsilon(1-\varepsilon)+\frac{(1-2\varepsilon)^2\beta P_a}{1+\beta}.
\end{equation}
The equilibrium $\pmess$ is obtained from $\hat{p}_m=0$ as $\pmess=-B/A$, which expands to
\begin{equation}\label{eq:ess}
	\pmess=\frac{2\varepsilon(1-\varepsilon)(1+\beta)+(1-2\varepsilon)^2\beta P_a}{(1+\beta)-(1-2\varepsilon)^2}.
\end{equation}

Rewriting the equation in terms of the error variable $\varrho=p_m-\pmess$ and substituting gives
\begin{equation}
	\begin{aligned}
		\frac{d\varrho}{dt}&=\hat{p}_m=Ap_m+B=A(\varrho+\pmess)+B\\
		&=A\varrho+A(-B/A)+B=A\varrho.
	\end{aligned}
\end{equation}
The solution of this equation is $\varrho(t)=\varrho(0)e^{At}$. Hence the stability of the equilibrium is entirely determined by the sign of the coefficient $A$. In practical UAV swarm scenarios, the sensing error generally satisfies $0<\varepsilon\le0.5$ and the malicious ratio satisfies $\beta>0$, in which case $A<0$. The exponential term $e^{At}$ decays with time, $\varrho(t)\to0$, the error keeps shrinking, and the system eventually returns to the equilibrium, which is therefore stable. Thus the equilibrium $\pmess$ under the zeroth-order approximation is asymptotically stable.

\subsection{Majority-Flip Threshold}
A natural question then arises: in this steady state, are most normal UAVs still honest? If the majority has already been misled, then the classical strategy of trusting the majority fails at the fusion center. It is therefore necessary to compare $\pmess$ with $1/2$.

\begin{corollary}[Majority-flip threshold]\label{cor:flip}
	For $0<\varepsilon<1/2$ and $\beta>0$,
	\begin{equation}
		\pmess\gtrless\frac{1}{2}\quad\Longleftrightarrow\quad P_a\gtrless\frac{1}{2}.
	\end{equation}
	In particular, this threshold is independent of both $\varepsilon$ and $\beta$.
\end{corollary}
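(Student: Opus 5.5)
Our approach is to take the closed form \eqref{eq:ess} of the zeroth-order ESS and compare it directly with $1/2$. First we check the sign of the denominator. Write $D=(1+\beta)-(1-2\varepsilon)^2$. For $0<\varepsilon<1/2$ we have $0<(1-2\varepsilon)^2<1$, and $\beta>0$ gives $1+\beta>1$. Hence $D>0$; this is the same fact as $A<0$ in \eqref{A}. Multiplying by $2D$ therefore preserves the direction of the inequality, so $\pmess\gtrless 1/2$ is equivalent to
\begin{equation*}
2\bigl[2\varepsilon(1-\varepsilon)(1+\beta)+(1-2\varepsilon)^2\beta P_a\bigr]\gtrless (1+\beta)-(1-2\varepsilon)^2 .
\end{equation*}

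The key algebraic step uses the identity $4\varepsilon(1-\varepsilon)=1-(1-2\varepsilon)^2$. With it, the first term on the left becomes $(1+\beta)\bigl[1-(1-2\varepsilon)^2\bigr]$. Subtracting the right-hand side, the terms $(1+\beta)$ cancel. The remaining terms in $(1-2\varepsilon)^2$ are $-(1+\beta)(1-2\varepsilon)^2+(1-2\varepsilon)^2=-\beta(1-2\varepsilon)^2$. The difference (left side minus right side) then collapses to
\begin{equation*}
2\beta(1-2\varepsilon)^2P_a-\beta(1-2\varepsilon)^2=\beta(1-2\varepsilon)^2\,(2P_a-1).
\end{equation*}
Equivalently, $\pmess-\tfrac12=\beta(1-2\varepsilon)^2(2P_a-1)/(2D)$. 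We will state this compact identity in the proof, since it gives all three cases ($>$, $=$, $<$) at once.

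To conclude, note that $\beta>0$ and $\varepsilon\neq1/2$ make the prefactor $\beta(1-2\varepsilon)^2/(2D)$ strictly positive. So the sign of $\pmess-1/2$ equals the sign of $2P_a-1$, which proves the equivalence. The threshold $P_a=1/2$ contains neither $\varepsilon$ nor $\beta$, which gives the independence claim.

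We do not expect a real obstacle. The only points needing care are confirming that $D>0$ over the whole stated parameter range, so the inequality direction is not reversed, and spotting the identity $4\varepsilon\bar\varepsilon=1-(1-2\varepsilon)^2$ that makes the $\varepsilon$-dependence factor out. We will also remark that the open condition $0<\varepsilon<1/2$ is needed: at $\varepsilon=1/2$ the prefactor vanishes and $\pmess\equiv1/2$.
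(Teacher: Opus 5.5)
Your proposal is correct and follows essentially the same route as the paper. The paper also sets $c=(1-2\varepsilon)^2$ and uses $2\varepsilon(1-\varepsilon)=(1-c)/2$ to get $\pmess-\tfrac12=c\beta\,(P_a-\tfrac12)/\bigl((1+\beta)-c\bigr)$, which is exactly your identity. It then reads off the sign from $c\beta>0$ and the positive denominator, as you do.
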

\begin{proof}
	Let $c=(1-2\varepsilon)^2\in(0,1)$. Since $2\varepsilon(1-\varepsilon)=(1-c)/2$,
	\begin{align}
		\pmess-\frac12
		&=\frac{\frac{1-c}{2}(1+\beta)+c\beta P_a-\frac12\left[(1+\beta)-c\right]}{(1+\beta)-c}\notag\\
		&=\frac{c\beta\left(P_a-\frac12\right)}{(1+\beta)-c}.
	\end{align}
	Since $(1+\beta)-c>0$ and $c\beta>0$, the sign is determined by $(P_a-\tfrac12)$.
\end{proof}

Corollary \ref{cor:flip} shows that when $P_a>1/2$, the aggregate reports of the swarm are systematically inverted, and any defense that trusts the majority is doomed to fail; when $P_a<1/2$, the majority remains trustworthy and classical defenses survive. It should be pointed out that the above threshold relies on the DB imitation update and the payoff ordering of Section \ref{Sec3}; if other update rules are adopted, the threshold must be re-derived, although the failure boundary of majority voting can be analyzed in a similar manner.

\subsection{Exact Mean-Field Transition Probability}
The derivation of $P_{n\to m}$ above performed a weak-selection expansion in $\alpha$, yielding the first-order approximation in \eqref{eq:P0}. In practical systems $\alpha$ need not always be negligible, so it is necessary to retain the full fitness in order to verify the reliability of the approximation. Keeping the forms of the fitnesses $\pi_m^{(A)}$, $\pi_n^{(A)}$, $\pi_m^{(B)}$, $\pi_n^{(B)}$ from Section \ref{Sec4A} unchanged, merely denoting the central error by $\varepsilon_f$ and the neighbor error by $\varepsilon_n$, and making no weak-selection truncation, one directly obtains the exact transition probability:
\begin{equation}\label{eq:Pexact} P^{\mathrm{ex}}_{n\to m}(\alpha,\varepsilon_f,\varepsilon_n,\rho) = (1-\varepsilon_f)\frac{\mathrm{Num}_A}{\Omega} + \varepsilon_f\frac{\mathrm{Num}_B}{\Omega}, \end{equation}
where $\rho=k_m/k$ is the local misinformation ratio, and
\begin{align} 
\mathrm{Num}_A &= k_m(1-\varepsilon_n)\pi_m^{(A)} + k_n\varepsilon_n\pi_n^{(B)},\\
\mathrm{Num}_B &= k_n(1-\varepsilon_n)\pi_n^{(A)} + k_m\varepsilon_n\pi_m^{(B)},\\
\Omega &= \mathrm{Num}_A + \mathrm{Num}_B. 
\end{align}
When $\alpha=0$, \eqref{eq:Pexact} reduces exactly to the zeroth-order term of \eqref{eq:P0}; its Taylor coefficient at $\alpha=0$ is precisely the closed-form function $G$ derived in Proposition \ref{prop:firstorder} below. The exact mean-field ESS $p^{\mathrm{mf}}_m(\alpha)$ is the solution of the fixed-point equation $p=P^{\mathrm{ex}}_{n\to m}(\alpha,\varepsilon,\varepsilon,\rho_s(p))$ and can be computed by simple iteration. This object serves in Section \ref{Sec7} as the all-order theoretical benchmark.

\section{Model Refinement and Robustness Analysis}\label{Sec5}
This section relaxes assumptions and improves accuracy on top of the previous section. It first considers per-node heterogeneous sensing errors and examines the dependence of the ESS on the error distribution; it then gives the first-order closed-form correction under weak selection; finally, it discusses graph-structural effects beyond the mean-field approximation.

\subsection{Heterogeneous Sensing Errors}
In real ISAC swarms, UAVs experience different echo SINRs due to differences in distance, viewing angle, and occlusion, so the per-node decision errors are heterogeneous. Concretely, consider the binary hypothesis test $H_0$: $\theta=0$ versus $H_1$: $\theta=1$. Under additive white Gaussian noise, the matched-filter output has signal-to-noise ratio $\gamma_i$; with maximum-likelihood decision, the false-alarm and miss probabilities are equal and given by
\begin{equation}\label{eq:epsmap}
	\varepsilon_i=P(\mathrm{Judgment-error}|\gamma_i)=Q\!\left(\sqrt{2\gamma_i}\right),
\end{equation}
where $Q(x)=\frac{1}{\sqrt{2\pi}}\int_x^\infty e^{-t^2/2}\,dt$ is the complementary cumulative distribution function of the standard normal distribution. Let $\eb=N^{-1}\sum_i\varepsilon_i$ denote the population-average error.

For a normal UAV $i$ with error $\varepsilon_i$, average neighbor error $\eb$, and local misinformation ratio $\rho_i$, the zeroth-order transition probability generalizes \eqref{eq:P0} to
\begin{equation}\label{eq:Pi}
	P_i=\varepsilon_i+\eb-2\varepsilon_i\eb+(1-2\varepsilon_i)(1-2\eb)\rho_i.
\end{equation}
Indeed, conditioning on the central observation as in \eqref{eq:Pnm} and letting $\alpha\to0$ gives $P_i=(1-\varepsilon_i)[\eb+(1-2\eb)\rho_i]+\varepsilon_i[1-\eb-(1-2\eb)\rho_i]$, which simplifies to \eqref{eq:Pi}.

\begin{assumption}[Mean-field neighborhood statistics]\label{ass:mf}
	The local misinformation ratio $\rho_i$ is uncorrelated with $\varepsilon_i$ in the population-average sense, i.e.,
	\begin{equation}
		\frac1N\sum_i(1-2\varepsilon_i)\rho_i = (1-2\eb)\,\frac1N\sum_i\rho_i.
	\end{equation}
\end{assumption}
This assumption has clear physical plausibility. $\varepsilon_i$ is determined by physical-layer channel conditions (distance, viewing angle, occlusion), while $\rho_i$ is determined by network-layer topology and neighbor behavior, and the two are assigned independently in the swarm. The sensing SINR of each UAV depends only on its own geometry, while neighbor relations are determined by the communication radius or a preconfigured topology, and the two are mutually uncorrelated.

\begin{theorem}[Zeroth-order robustness to heterogeneity]\label{thm:hetero}
	Under Assumption \ref{ass:mf}, the population-average zeroth-order dynamics of the misinformation ratio under heterogeneous errors $\{\varepsilon_i\}$ is
	\begin{equation}\label{eq:heterodyn}
		\hat{p}_m=2\eb(1-\eb)+(1-2\eb)^2\frac{p_m+\beta P_a}{1+\beta}-p_m,
	\end{equation}
	identical to the homogeneous dynamics \eqref{eq:zeroth} with $\varepsilon=\eb$. Consequently, the ESS depends on the distribution of $\{\varepsilon_i\}$ only through the mean $\eb$.
\end{theorem}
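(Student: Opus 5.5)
The plan is to average the per-node zeroth-order transition probability \eqref{eq:Pi} over all normal UAVs and then substitute into the DB rate equation \eqref{eq:dpdt}. The population-level quantity of interest is $\hat{p}_m=\frac1N\sum_i P_i - p_m$. Here the sum runs over normal nodes, and $\eb$ is their population average; this is the reading under which $\eb$ enters \eqref{eq:Pi} as the average neighbor error.

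\textbf{Step 1: average the constant part.} Summing \eqref{eq:Pi} term by term gives
\begin{equation*}
\frac1N\sum_i\left(\varepsilon_i+\eb-2\varepsilon_i\eb\right)=2\eb-2\eb^2=2\eb(1-\eb).
\end{equation*}
This step uses only linearity in $\varepsilon_i$. Because $\eb$ is a population constant and not node-specific, no second moment of $\{\varepsilon_i\}$ appears.

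\textbf{Step 2: average the coupling term.} The remaining term is $(1-2\eb)\frac1N\sum_i(1-2\varepsilon_i)\rho_i$. This is the only place where a correlation between $\varepsilon_i$ and $\rho_i$ could enter, so it is the main obstacle. Assumption \ref{ass:mf} is exactly what removes it: it replaces the sum by $(1-2\eb)\frac1N\sum_i\rho_i$, so the term becomes $(1-2\eb)^2\frac1N\sum_i\rho_i$.

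\textbf{Step 3: evaluate the mean local ratio.} I would then show that $\frac1N\sum_i\rho_i=\rho_s=(p_m+\beta P_a)/(1+\beta)$. This repeats the large-network neighbor counting that led to \eqref{eq:rho}. Each node has $k_o$ ordinary neighbors, a fraction $p_m$ of which flip, and $\beta k_o$ malicious neighbors, a fraction $P_a$ of which flip. That counting never used the error values, so it carries over unchanged to the heterogeneous case.

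\textbf{Step 4: collect terms.} Putting the pieces together gives \eqref{eq:heterodyn}, which is \eqref{eq:zeroth} with $\varepsilon$ replaced by $\eb$. The ESS is therefore \eqref{eq:ess} evaluated at $\varepsilon=\eb$. Its stability follows as in Section \ref{Sec4B}, since $A<0$ whenever $\beta>0$. As a consequence, the ESS depends on $\{\varepsilon_i\}$ only through $\eb$.

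The one point needing care is the linearity of \eqref{eq:Pi} in $\varepsilon_i$ once the neighbor error is fixed at $\eb$. In particular, the product $\varepsilon_i\eb$ averages to $\eb^2$ rather than to the mean of $\varepsilon_i^2$. This is what guarantees that only the mean of the error distribution survives.
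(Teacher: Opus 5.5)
Your proposal is correct and follows essentially the same route as the paper's proof. You average \eqref{eq:Pi} over the normal UAVs, reduce the constant part to $2\eb(1-\eb)$ by linearity, remove the coupling correlation with Assumption~\ref{ass:mf}, and substitute $\rho_s$. The one cosmetic difference is that you apply the assumption before replacing $\rho_i$, so you only need $\frac1N\sum_i\rho_i=\rho_s$ rather than the paper's nodewise $\rho_i\to\rho_s$; under the nodewise version the assumption holds trivially.
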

\begin{proof}
	Averaging \eqref{eq:Pi} over the normal UAVs and applying Assumption \ref{ass:mf} together with $\rho_i\to\rho_s$ gives
	\begin{align}
		\frac1N\sum_i P_i
		&=\underbrace{\eb+\eb-2\eb\cdot\eb}_{=\,2\eb(1-\eb)}
		+\underbrace{\frac1N\sum_i(1-2\varepsilon_i)}_{=\,1-2\eb}(1-2\eb)\rho_s\\
		&=2\eb(1-\eb)+(1-2\eb)^2\rho_s.
	\end{align}
	Substituting $\rho_s=(p_m+\beta P_a)/(1+\beta)$ yields \eqref{eq:heterodyn}.
\end{proof}

\begin{remark}
	Theorem \ref{thm:hetero} shows that the fusion center only needs the population mean $\eb$ to predict the evolution trend of normal reports, which greatly simplifies online implementation; the per-node errors $\varepsilon_i$ still matter for the fusion likelihood itself, and Section \ref{Sec6} exploits them through the per-node MAP rule. It should be noted that the above robustness is premised on Assumption \ref{ass:mf} (uncorrelated $\varepsilon_i$ and $\rho_i$); if channel conditions are strongly coupled with network topology (e.g., edge nodes simultaneously have high $\varepsilon_i$ and high $\rho_i$), the conclusion of the theorem requires correction, and such cases are left for future work.
\end{remark}

\subsection{First-Order Weak-Selection Closed-Form Correction}

The first-order term $\varpi(k_m)$ in \eqref{eq:P0} contains the payoff information, but its form is $\Phi_1/k - P_0\cdot\Phi_2/k$, where $\Phi_1$ and $\Phi_2$ are each weighted sums of four $\Psi_i^{(X)}$, and each $\Psi_i^{(X)}$ is in turn a linear combination of payoffs. This three-layer nesting makes it impossible to directly read off the relation between the sign of $\varpi$ and the payoff parameters. This section simplifies it, under the mean-field approximation, into a closed form linear in $\{u_{ns},u_{nd},u_{ms},u_{md}\}$, thereby analytically characterizing the directional influence of payoff-driven selection on the ESS.

The starting point of the derivation is the exact mean-field transition probability \eqref{eq:Pexact}. Viewing it as a function of $\alpha$ and taking a first-order Taylor expansion at $\alpha=0$:
\begin{equation}
	P^{\mathrm{ex}}_{n\to m}(\alpha) = P^{\mathrm{ex}}_{n\to m}(0) + \frac{\partial P^{\mathrm{ex}}_{n\to m}}{\partial\alpha}\Big|_{\alpha=0}\cdot\alpha + O(\alpha^2).
\end{equation}
The constant term $P^{\mathrm{ex}}_{n\to m}(0)$ is precisely the zeroth-order term of \eqref{eq:P0}, and the first-order coefficient is the mean-field form of $\varpi$. Under the mean-field approximation, setting $k_m/k=\rho$, applying the quotient rule to \eqref{eq:Pexact}, and simplifying with computer algebra (SymPy symbolic expansion, cross-checked against a central-difference scheme to a relative error below $10^{-10}$), one obtains the following closed-form result.

\begin{proposition}[First-order correction term]\label{prop:firstorder}
	Let $e=\eb$ and $\rho=\rho_s$. Define the sensing-reliability measure $c=(1-2e)^2\in(0,1)$ and the auxiliary quantities
	\begin{equation}
		q = 2e\rho - e - \rho,\quad
		Y = 4e^2\rho^2 - 4e^2\rho + e^2 - 4e\rho^2 + 4e\rho - e + \rho^2.
	\end{equation}
	Then the first-order coefficient $\mathbb{E}[\varpi(k_m)]$ of $\alpha$ in the population transition probability is
	\begin{equation}\label{eq:G}
		G(e,\rho) = g_{ns}\,u_{ns} + g_{nd}\,u_{nd} + g_{ms}\,u_{ms} + g_{md}\,u_{md},
	\end{equation}
	where
	\begin{align}
		g_{ns} &= -c(\rho-1)\,q(q+1),\\
		g_{nd} &= c(\rho-1)\,Y,\\
		g_{ms} &= -\rho\,c\,q(q+1),\\
		g_{md} &= \rho\,c\,(Y-2\rho+1).
	\end{align}
	The first-order population dynamics is therefore
	\begin{equation}\label{eq:firstorder}
		\hat{p}_m = Ap_m + B + \alpha\,G(\eb,\rho_s) + O(\alpha^2),
	\end{equation}
	where $A$ and $B$ are given in (\ref{A}) and (\ref{B}). Setting $\hat{p}_m=0$ and applying the implicit function theorem around the zeroth-order ESS $p_0^*$ yields the ESS shift
	\begin{equation}\label{eq:ess1}
		p_m^*(\alpha) = p_0^* - \alpha\,\frac{G(\eb,\rho_0^*)}{A} + O(\alpha^2),
	\end{equation}
	where $\rho_0^* = (p_0^* + \beta P_a)/(1+\beta)$.
\end{proposition}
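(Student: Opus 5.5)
The plan is a direct symbolic computation of the first-order Taylor coefficient of \eqref{eq:Pexact}, followed by a standard implicit-function argument. Under the mean-field approximation I would set $\varepsilon_f=\varepsilon_n=e$ and replace $k_m/k$ by its mean $\rho$. The step $\mathbb{E}[\varpi(k_m)]\approx\varpi(\mathbb{E}[k_m])$ is taken as part of that approximation, so $\varpi$ is evaluated at $k_m=\rho k$. For the heterogeneous case, $e=\eb$ is justified in the same way as in Theorem \ref{thm:hetero}.

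The key reduction is to introduce $s:=k^S_m/k=(1-e)\rho+e(1-\rho)$. Then $q=-s$, and a short check gives $Y=s^2+2e\rho-e$. By \eqref{eq:kmS} and \eqref{eq:piA}--\eqref{eq:piD}, every payoff block $\Psi_i^{(X)}$ in \eqref{eq:piPsi} is affine in $s$:
\begin{align}
\Psi_m^{(A)}&=s\,u_{ms}+(1-s)u_{md}, & \Psi_n^{(A)}&=(1-s)u_{ns}+s\,u_{nd},\\
\Psi_m^{(B)}&=(1-s)u_{ms}+s\,u_{md}, & \Psi_n^{(B)}&=s\,u_{ns}+(1-s)u_{nd}.
\end{align}
Since $\pi=1+\alpha(\Psi-1)$, the numerator and denominator of \eqref{eq:Pexpanded}, after division by $k$, have the form $A_0+\alpha(\Phi_1/k-A_0)$ and $1+\alpha(\Phi_2/k-1)$. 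Differentiating the quotient at $\alpha=0$, the $-A_0$ contributions cancel, which leaves
\[
\partial_\alpha P^{\mathrm{ex}}_{n\to m}\big|_{\alpha=0}=\Phi_1/k-A_0\,\Phi_2/k .
\]
This is exactly $\varpi$ in \eqref{eq:varpi}, with $A_0=2e\bar e+(1-2e)^2\rho$.

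Next I would substitute the four $\Psi$'s into \eqref{eq:Phi1}--\eqref{eq:Phi2}, with $k_m/k=\rho$ and $k_n/k=1-\rho$, and collect the coefficient of each payoff. This produces four polynomials in $(e,\rho)$. The target is to show that each factors as in the statement: a common factor $c=(1-2e)^2$, a factor $(\rho-1)$ for the $n$-payoffs and $\rho$ for the $m$-payoffs, and a remaining factor $q(q+1)=s(s-1)$, $Y$, or $Y-2\rho+1$.

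I expect this factorization to be the main obstacle. The raw expressions are degree-four polynomials, and the common factor $c$ only appears after combining the $\Phi_1$ and $A_0\Phi_2$ contributions. My first attempt would be to rewrite everything in the variables $s$ and $\delta:=1-2e$, using $e\bar e=(1-\delta^2)/4$ and $2s-1=\delta(2\rho-1)$, so that powers of $\delta$ factor out visibly. As a fallback, I would verify the claimed identity by symbolic expansion, as the paper does with SymPy, together with spot checks at $e=0$ and $\rho\in\{0,1\}$.

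Finally, I would write $F(p,\alpha)=Ap+B+\alpha\,G(\eb,\rho_s(p))+O(\alpha^2)$, combining \eqref{eq:dpdt} with the zeroth-order part from Section \ref{Sec4B}. At $(p_0^*,0)$ we have $F=0$ and $\partial_pF=A<0$, since $0<\varepsilon\le1/2$ and $\beta>0$. The implicit function theorem then gives a smooth branch $p_m^*(\alpha)$ with
\[
\frac{dp_m^*}{d\alpha}\Big|_{0}=-\frac{\partial_\alpha F}{\partial_pF}=-\frac{G(\eb,\rho_0^*)}{A},
\]
which is \eqref{eq:ess1}.
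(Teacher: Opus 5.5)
Your proposal is correct and follows essentially the paper's own route: differentiate the exact mean-field transition probability \eqref{eq:Pexact} at $\alpha=0$ with $k_m/k=\rho$, collect the coefficient of each payoff, then apply the implicit function theorem at $(p_0^*,0)$ with $\partial_pF=A<0$; your cancellation showing this derivative equals $\varpi=\Phi_1/k-A_0\Phi_2/k$ is right. The factorization you flag as the obstacle, which the paper delegates to SymPy, also closes by hand: with your $s$ and $\delta=1-2e$ one has $A_0=e+\delta s=2e(1-e)+c\rho$, $A_0^2=e(1-e)+cY$ and $Y=\rho-s(1-s)$, and these three identities turn the raw coefficients $(1-\rho)[e(1-e)-A_0(1-A_0)]$, $(1-\rho)[e(1-e)-A_0^2]$, $\rho[e(1-e)+c\rho-A_0^2]$ and $\rho[(1-e)^2-\delta s-A_0(1-A_0)]$ directly into the stated $g_{ns}$, $g_{nd}$, $g_{ms}$, $g_{md}$.
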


It should be pointed out that the above analysis is based on the mean-field approximation, in which every node sees the population-average $\rho_s$. On finite graphs, fluctuations of the local $\rho_i$ and the correlation between node strategies and neighbor composition generally rescale the effective selection intensity \cite{DB,OhtsukiNature,NowakFive}. Section \ref{Sec7} simulates two microscopic dynamics on random regular graphs (the proposed DB transition and direct fitness imitation), and the results show that both agree with the mean-field prediction; the deviation manifests as a slightly steeper descent caused by structural amplification, which lies within an acceptable range.

\section{Conformity-Aware MAP Fusion}\label{Sec6}
The ground station receives all reports $\mathbf{r}=\{r_i^t\}$ and must infer the true state sequence $\Theta=(\theta^1,\dots,\theta^T)$. Under a uniform prior, the MAP estimate is
\begin{equation}
	\theta^*=\arg\max_{\Theta}P(\mathbf{r}\mid\Theta).
\end{equation}
Introduce the type vector $\xi=(\xi_1,\dots,\xi_N)$, where $\xi_i=1$ marks a malicious UAV. By the law of total probability,
\begin{equation}\label{eq:map}
	\theta^*=\arg\max_{\Theta}\sum_{\xi}P(\mathbf{r}\mid\Theta)P(\xi).
\end{equation}

\textbf{1) Malicious report model.} A malicious UAV flips its report with probability $P_a$; accounting for the sensing error, its report error probability is
\begin{equation}\label{eq:delta}
	\delta=\varepsilon(1-P_a)+(1-\varepsilon)P_a,
\end{equation}
so $P(r_i^t\mid\theta^t,\xi_i=1)=(1-\delta)^{q_i(t)}\delta^{1-q_i(t)}$, where $q_i(t)=\mathbb{1}[r_i^t=\theta^t]$.

\textbf{2) Normal report model.} The report of a normal UAV depends on its perception and, through conformity, on the evolutionary state. At $t=1$ it reports honestly with error $\varepsilon$. For $t>1$, letting the misinformation ratio be $p_m^t$, its report error probability is
\begin{equation}\label{eq:gamma}
	\gamma^t=(1-\varepsilon)p_m^t+\varepsilon(1-p_m^t),
\end{equation}
so $P(r_i^t\mid\mathbf{r}^{t-1},\theta^t,\xi_i=0)=(1-\gamma^t)^{q_i(t)}(\gamma^t)^{1-q_i(t)}$. Defining $\gamma^1=\varepsilon$ unifies the two cases.

Given the malicious ratio $\beta$, $P(\xi_i=1)=\beta/(1+\beta)$ and $P(\xi_i=0)=1/(1+\beta)$. Substituting into \eqref{eq:map} and using conditional independence,
\begin{equation}\label{eq:mapfull}
	\begin{aligned}
		\theta^*=\arg\max_{\Theta}\prod_{i=1}^{N}\Bigg[
		&\frac{\beta}{1+\beta}\prod_{t=1}^{T}(1-\delta)^{q_i(t)}\delta^{1-q_i(t)}\\
		+&\frac{1}{1+\beta}\prod_{t=1}^{T}(1-\gamma^t)^{q_i(t)}(\gamma^t)^{1-q_i(t)}
		\Bigg].
	\end{aligned}
\end{equation}

\subsection{Per-Node Heterogeneous Fusion}
Theorem \ref{thm:hetero} shows that the \emph{evolution} of $p_m^t$ requires only $\eb$, but the fusion likelihood is improved by using the individual $\varepsilon_i$. Each UAV feeds back its sensing-SINR estimate over the ISAC link, and the fusion center obtains $\varepsilon_i$ from \eqref{eq:epsmap}. The per-node error probabilities become
\begin{equation}\label{eq:gammadelta_i}
	\begin{aligned}
		\gamma_i^t&=(1-\varepsilon_i)p_m^t+\varepsilon_i(1-p_m^t),\\
		\delta_i&=\varepsilon_i(1-P_a)+(1-\varepsilon_i)P_a,
	\end{aligned}
\end{equation}
and the per-slot MAP decision replaces \eqref{eq:mapfull} with
\begin{equation}\label{eq:mapnode}
	\begin{aligned}
		\hat\theta^t=\arg\max_{\theta\in\{0,1\}}\sum_{i=1}^{N}\ln\Big[
		&\tfrac{\beta}{1+\beta}P(r_i^t|\theta,\xi_i\!=\!1)\\
		&+\tfrac{1}{1+\beta}P(r_i^t|\theta,\xi_i\!=\!0)\Big],
	\end{aligned}
\end{equation}
where the two conditional probabilities are evaluated with $\delta_i$ and $\gamma_i^t$, respectively. Equation \eqref{eq:mapnode}, together with the evolutionary tracking of $p_m^t$, is termed evolutionary-game-theoretic MAP (EGTM) fusion; its homogeneous ($\varepsilon_i\equiv\eb$) and per-node variants are denoted by EGTM-homog and EGTM-pernode, respectively.
\begin{figure*}[t]
	\centerline{\includegraphics[width=7.3in,keepaspectratio]{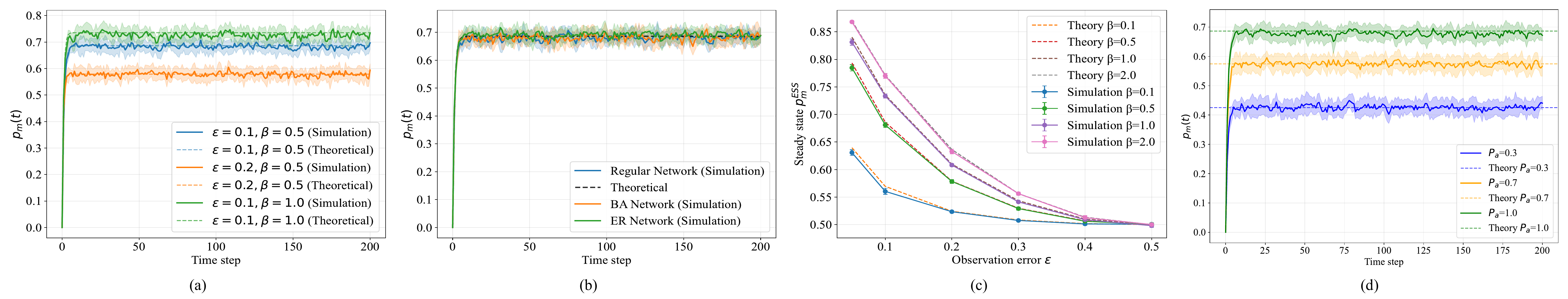}}
	\caption{Validation of the evolution dynamics and the steady-state landscape. (a) Evolution curves; (b) topology independence; (c) dependence of the ESS on $\varepsilon$ and $\beta$; (d) dependence of the ESS on $P_a$.}
	\label{fig2}
\end{figure*}
\subsection{Online Implementation}
Starting from the initial value $p_m^0=0$, the fusion center predicts $p_m^t$ online by propagating the evolution equation \eqref{eq:zeroth} (or \eqref{eq:firstorder} when $\alpha$ is not negligible), requiring only $(\eb,\beta,P_a,\alpha)$ and the payoff matrix. Directly evaluating \eqref{eq:mapfull} faces two problems: $2^T$ enumeration complexity and numerical underflow. This paper adopts a sliding window of length $\tau$: within each window $p_m^t$ is treated as constant, reducing the enumeration complexity to $2^\tau$, and all products are computed in the log domain. Since $\gamma^t$ is predicted slot by slot, the minimum window $\tau=1$ already achieves full accuracy, thereby yielding the lowest fusion latency, which directly supports the low-latency requirement of ISAC-6G situation inference. Algorithm \ref{alg:egtm} summarizes the per-node procedure.

\begin{algorithm}[t]
	\caption{EGTM per-node fusion (per slot)}
	\label{alg:egtm}
	\begin{algorithmic}[1]
		\REQUIRE Reports $\{r_i^t\}$, errors $\{\varepsilon_i\}$, parameters $(\beta,P_a)$, dynamics \eqref{eq:zeroth}/\eqref{eq:firstorder}
		\STATE Initialize $p_m^0=0$; compute $\delta_i$ via \eqref{eq:gammadelta_i}
		\FOR{each slot $t$}
		\STATE Propagate $p_m^{t}$ with the evolution equation
		\STATE Compute $\gamma_i^t$ for all $i$ via \eqref{eq:gammadelta_i}
		\STATE Accumulate the log-likelihood $\ell_\theta=\sum_i\ln\left[\frac{\beta}{1+\beta}P(r_i^t|\theta,\mathrm{mal})+\frac{1}{1+\beta}P(r_i^t|\theta,\mathrm{nor})\right]$ for $\theta\in\{0,1\}$
		\STATE Output $\hat\theta^t=\arg\max_\theta\ell_\theta$
		\ENDFOR
	\end{algorithmic}
\end{algorithm}

\section{Simulations and Analysis}\label{Sec7}
Monte Carlo simulations validate the dynamics, the ESS, and the fusion mechanism. Unless otherwise specified: $N=500$ UAVs on a random regular graph of degree $k=10$; payoff matrix $u_{ns}=0.8$, $u_{nd}=0.6$, $u_{ms}=0.6$, $u_{md}=0.4$; initial misinformation ratio $0$; evolution horizon $T=200$ ($300$ for the weak-selection experiments) to ensure stationarity; and fusion window $\tau=6$ except in the window-length experiment. To verify topological robustness, BA scale-free networks (average degree about 6) and Erd\H{o}s--R\'enyi networks ($p=k/N=0.02$) are also used. All experiments are independently repeated 40 times. The four sensing-error distributions in the heterogeneity experiments are: homogeneous ($\varepsilon_i\equiv\eb$); two-point (taking $\eb\pm0.05$ with equal probability); uniform ($\eb\pm0.08$); and SNR-based, whose log-normal SINR ($\sigma=5$~dB) is calibrated via \eqref{eq:epsmap} so that $\mathbb{E}[\varepsilon_i]=\eb$.
\begin{figure}[t]
	\centerline{\includegraphics[width=3.0in,keepaspectratio]{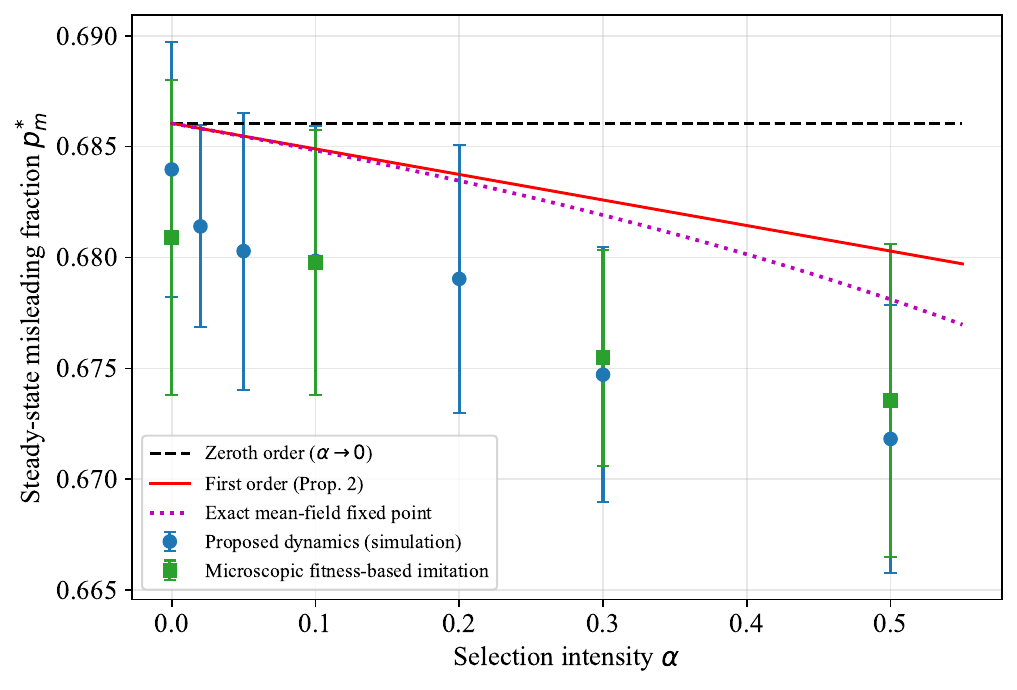}}
	\caption{Steady-state misinformation ratio versus selection intensity $\alpha$.}
	\label{fig11}
\end{figure}

\subsection{Evolution Dynamics}
Fig. \ref{fig2} summarizes the evolution dynamics and the steady-state landscape. (a) With the regular topology fixed, three parameter settings are compared: under the baseline ($\varepsilon=0.1$, $\beta=0.5$, $P_a=1$), $p_m(t)$ rises rapidly from $0$ and saturates at the theoretical ESS; increasing $\varepsilon$ to $0.2$ lowers the steady state (sensing noise dilutes the inducement), while increasing $\beta$ to $1.0$ raises it (more malicious neighbors amplify the attack). (b) With $\varepsilon=0.1$, $\beta=0.5$, $P_a=1$ fixed, the regular, BA, and ER topologies are compared; the three evolution curves almost coincide and agree with the theoretical ODE, confirming that the macroscopic dynamics is insensitive to the topology. (c) $\pmess$ decreases with $\varepsilon$ and increases with $\beta$, the two having opposite effects; at $\varepsilon=0.5$ the neighbor-influence term vanishes and all curves collapse to $0.5$. (d) Larger $P_a$ yields a higher steady state, consistent with the linear dependence in Corollary \ref{cor:flip}.

\subsection{First-Order Weak-Selection Effect}
Fig. \ref{fig11} validates Proposition \ref{prop:firstorder} ($\varepsilon=0.1$, $\beta=0.5$, $P_a=1.0$). The zeroth-order ESS $p_0^*=0.6860$, the first-order line \eqref{eq:ess1} ($G=-6.60\times10^{-3}$), and the exact mean-field fixed point \eqref{eq:Pexact} coincide as $\alpha\to0$. Superimposed are the proposed dynamics (exact transition probability with the realized local $\rho_i$) and microscopic fitness imitation. As $\alpha$ grows, the exact mean-field value drops from $0.6860$ to $0.6781$ ($\alpha=0.5$), and the first-order line is its tangent at the origin (deviation $<3\times10^{-5}$ at $\alpha=0.05$), validating the closed form $G$. Both microscopic simulations reproduce the monotone suppression; the slightly steeper descent stems from the graph-structural amplification effect \cite{DB,OhtsukiNature}. Thus payoff-driven conformity, rather than aiding the attack, mildly suppresses swarm-level misinformation under a consensus-oriented payoff regime.

\subsection{Robustness to Heterogeneous Sensing Errors}
Fig. \ref{fig12} validates Theorem \ref{thm:hetero}, with $N=300$, $\eb=0.1$, $P_a=1.0$, $\beta\in\{0.3,0.5,0.7\}$, and $T=300$. For each $\beta$, the population mean is fixed at $\eb=0.1$ and four error distributions with markedly different shapes are adopted. The inset compares the histograms of the four distributions. The bar chart shows that, for the same $\beta$, the steady-state means under the four distributions are almost equal, with cross-distribution dispersion below $0.002$, and all agree with the homogeneous theoretical value at $\varepsilon=\eb$ within the error bars. Even though the SNR distribution is clearly right-skewed (red in the inset), its ESS is governed solely by $\eb$, matching the prediction of Theorem \ref{thm:hetero}. This verifies that the fusion center needs no full error distribution and can predict the evolution trend of the misinformation ratio using only the population mean.
\begin{figure}[t]
	\centerline{\includegraphics[width=3.0in,keepaspectratio]{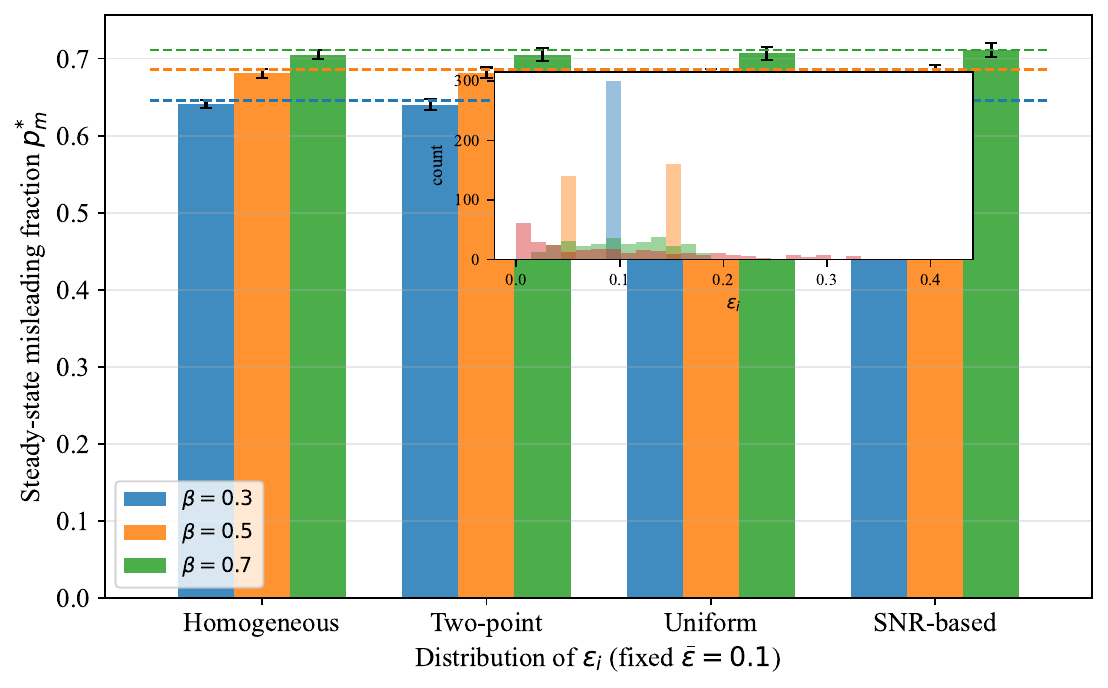}}
	\caption{Steady-state misinformation ratio under four distributions of $\varepsilon_i$ with the mean fixed at $\eb=0.1$. Dashed line: homogeneous theory \eqref{eq:ess}. Inset: the four $\varepsilon_i$ distributions.}
	\label{fig12}
\end{figure}

\subsection{Fusion Performance}
Fig. \ref{fig6} gives the accuracy of the MAP mechanism over an $(\varepsilon,\beta)$ grid with $P_a\in\{0.3,0.7,1.0\}$. Accuracy decreases with $\varepsilon$ and approaches the random-guess level $0.5$ as $\varepsilon=0.5$. For fixed $\varepsilon$, a larger $\beta$ instead \emph{improves} accuracy, because the malicious error probability $\delta$ is then more sharply characterized, enabling the fusion center to better separate malicious from normal reports; $P_a=1$ is the easiest to identify, since an adversary that always flips is statistically the most distinctive.

\begin{figure}[t]
	\centerline{\includegraphics[width=3.6in,keepaspectratio]{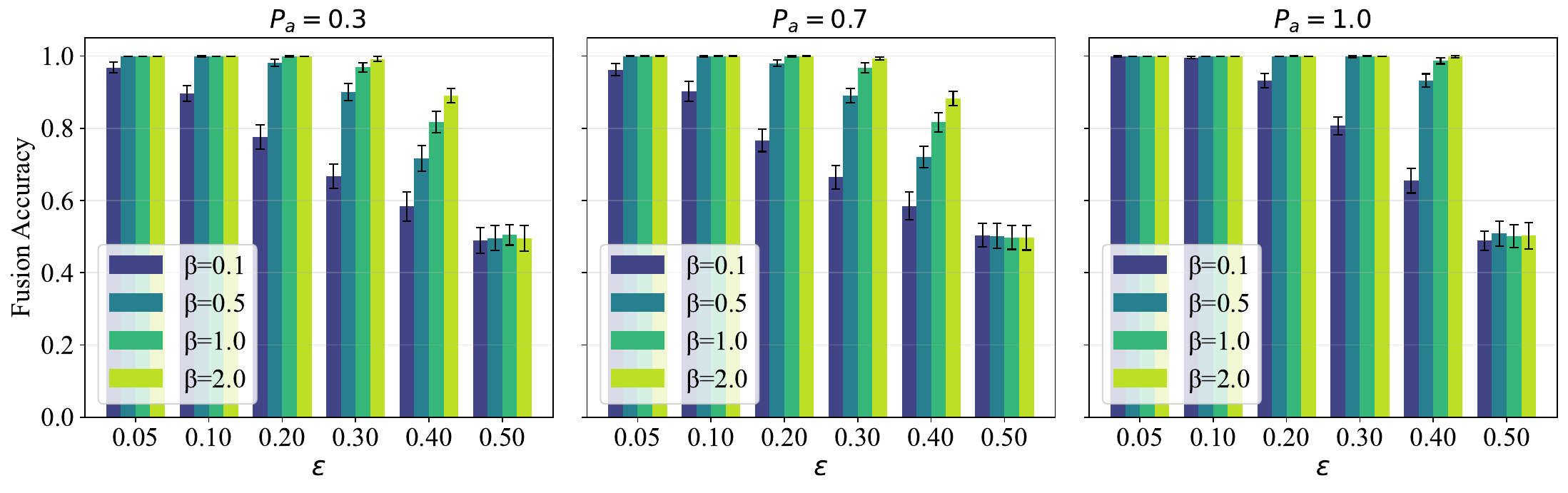}}
	\caption{Fusion accuracy of the proposed mechanism under different $(\varepsilon,\beta,P_a)$.}
	\label{fig6}
\end{figure}

Table \ref{tab:results} compares the fusion methods at $\varepsilon=0.1$ under a strong attack ($P_a=0.8$) and a weak attack ($P_a=0.3$). The baselines include: majority voting (MV); reputation-weighted fusion (RW) \cite{WSPRT}, whose reputation tracks consistency with the global decision; Beta-distribution reputation fusion (Beta) \cite{BetaCSS}; and independent fusion (IFM), which fixes the normal-node error probability at $\varepsilon$. Under the weak attack, Corollary \ref{cor:flip} gives $\pmess<1/2$, the majority is not systematically inverted, and all baselines perform comparably to the proposed method ($>0.99$). Under the strong attack, $\pmess$ crosses the majority-flip threshold and swarm-level misinformation occurs. MV follows the contaminated majority and systematically decides the wrong way (accuracy approaching $0$ rather than $0.5$); RW and Beta update reputations using the contaminated global decision as the reference, persistently rewarding misleading nodes and punishing honest ones; IFM suffers model mismatch due to its fixed error probability. Only the proposed mechanism, which predicts $\gamma^t$ online via the evolution equation, maintains accuracy $\ge 0.9995$ under all parameter combinations. This confirms that the failure boundary of existing defenses is exactly the majority-flip point $\pmess=1/2$ predicted by Corollary \ref{cor:flip}.

\begin{table}[t]
	\centering
	\caption{Accuracy comparison of fusion methods, $\varepsilon=0.1$}
	\label{tab:results}
	\begin{tabular}{lcccc}
		\toprule
		\multicolumn{5}{c}{$P_a=0.8$ (strong attack)}\\
		\midrule
		Method & $\beta=0.3$ & $\beta=0.5$ & $\beta=0.7$ & $\beta=0.9$ \\
		\midrule
		Proposed & $0.9997$ & $0.9995$ & $1.0000$ & $1.0000$ \\
		RW \cite{WSPRT} & $0.0029$ & $0.0016$ & $0.0009$ & $0.0009$ \\
		Beta \cite{BetaCSS} & $0.0037$ & $0.0021$ & $0.0010$ & $0.0009$ \\
		MV & $0.0027$ & $0.0013$ & $0.0009$ & $0.0009$ \\
		IFM & $0.0027$ & $0.0013$ & $0.0009$ & $0.0009$ \\
		Random & $0.5000$ & $0.4998$ & $0.5001$ & $0.4999$ \\
		\midrule
		\multicolumn{5}{c}{$P_a=0.3$ (weak attack)}\\
		\midrule
		Method & $\beta=0.3$ & $\beta=0.5$ & $\beta=0.7$ & $\beta=0.9$ \\
		\midrule
		Proposed & $0.9943$ & $0.9999$ & $1.0000$ & $1.0000$ \\
		RW \cite{WSPRT} & $0.9965$ & $0.9999$ & $1.0000$ & $1.0000$ \\
		Beta \cite{BetaCSS} & $0.9970$ & $1.0000$ & $1.0000$ & $1.0000$ \\
		MV & $0.9941$ & $0.9999$ & $1.0000$ & $1.0000$ \\
		IFM & $0.9942$ & $0.9999$ & $1.0000$ & $1.0000$ \\
		Random & $0.5038$ & $0.4992$ & $0.4998$ & $0.4991$ \\
		\bottomrule
	\end{tabular}
\end{table}
\begin{figure}[t]
	\centerline{\includegraphics[width=3.0in,keepaspectratio]{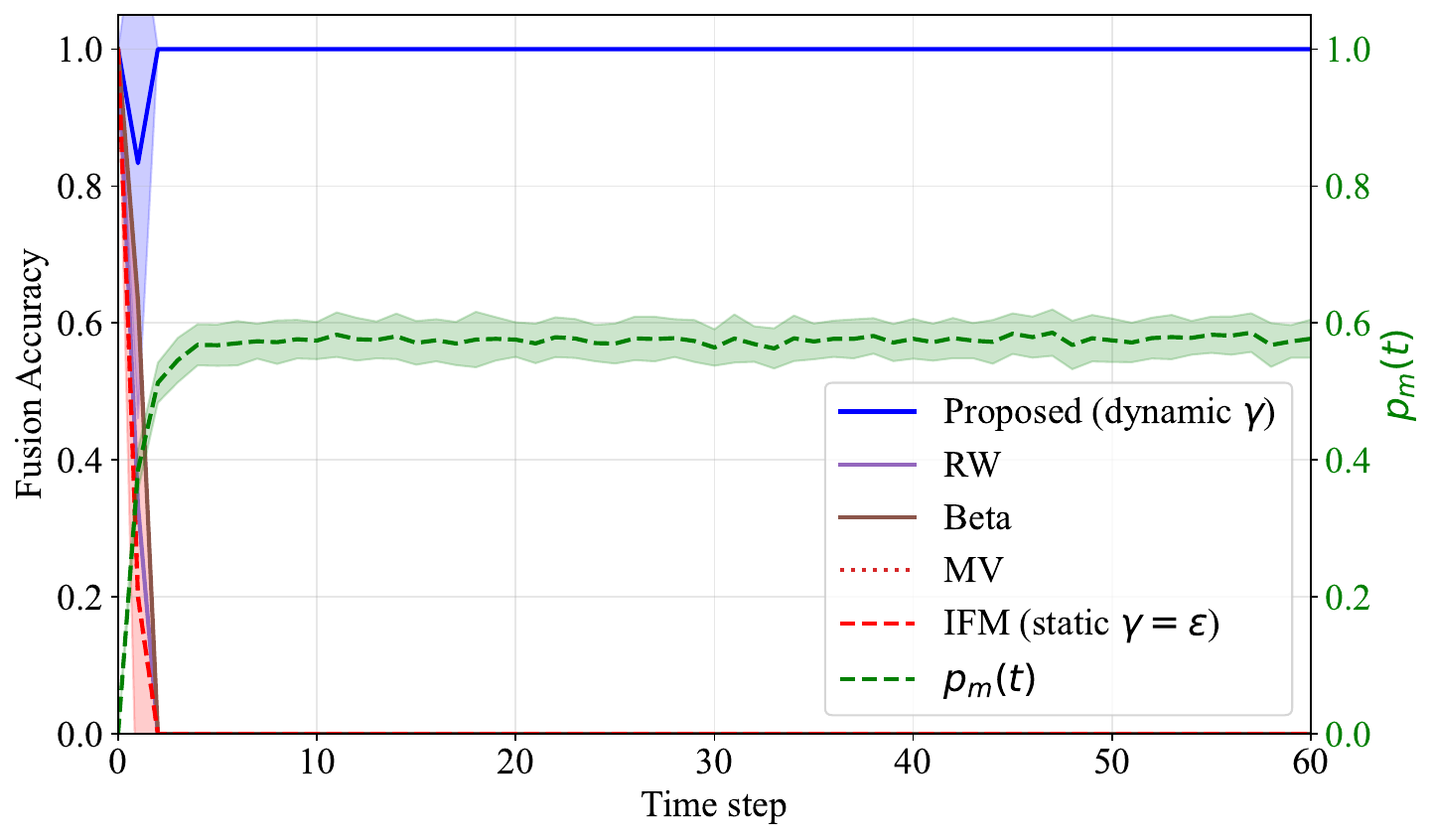}}
	\caption{Fusion accuracy versus evolution time ($\varepsilon=0.1$, $\beta=0.5$, $P_a=0.8$, per-slot decision).}
	\label{fig7}
\end{figure}

Fig. \ref{fig7} dissects the collapse process under a strong attack ($\varepsilon=0.1$, $\beta=0.5$, $P_a=0.8$) along the time dimension. In the initial stage the misinformation has not yet spread, and all methods decide correctly. Within about two slots, however, $p_m(t)$ crosses the majority-flip threshold $0.5$ (steady state about $0.58$), and the accuracies of MV, RW, Beta, and IFM collapse to zero simultaneously. Swarm-level misinformation breaks through the existing fusion mechanisms within two slots, and the reputation-based methods cannot recover because their reference (the global decision) is itself contaminated. The proposed method tracks $\gamma^t$ slot by slot and maintains accuracy close to $1.0$ throughout. The collapse time corresponds exactly to the threshold-crossing time, again validating Corollary \ref{cor:flip}.

\subsection{Fusion under Heterogeneous Errors}
Fig. \ref{fig13} scans the malicious ratio $\beta\in[0,1]$ under homogeneous and SNR-based heterogeneous errors, comparing EGTM-pernode (proposed), EGTM-homog, IFM, and majority voting. Two regimes stand out. (i) At $\beta=0$, although no attacker is present, imitation-based conformity alone drives the swarm to $\pmess=1/2$ (setting $\beta=0$ in \eqref{eq:ess} always yields $1/2$), so the reports lose their information content, and MV and IFM degenerate to the random level $\approx0.5$; the proposed mechanism can still extract the residual statistical asymmetry and maintains accuracy $0.83$--$0.85$. This self-inflicted effect is an intrinsic property of the DB update model in this paper, not a universal defect of consensus-based cooperative sensing \cite{QCons}. (ii) For any $\beta>0$, the baselines collapse across the board: over $\beta\in[0.05,0.5]$, IFM and MV drop to $0$--$0.25$ and do not recover reliably as $\beta$ grows further. EGTM-pernode maintains $0.83$--$1.0$ over the entire scan and reaches full accuracy from $\beta=0.2$ (homogeneous) or $\beta=0.3$ (SNR); EGTM-homog is slightly inferior under SNR heterogeneity, quantifying the gain of per-node error knowledge. These results confirm that the proposed framework is robust to both attack intensity and sensing heterogeneity.

\begin{figure}[t]
	\centerline{\includegraphics[width=3.6in,keepaspectratio]{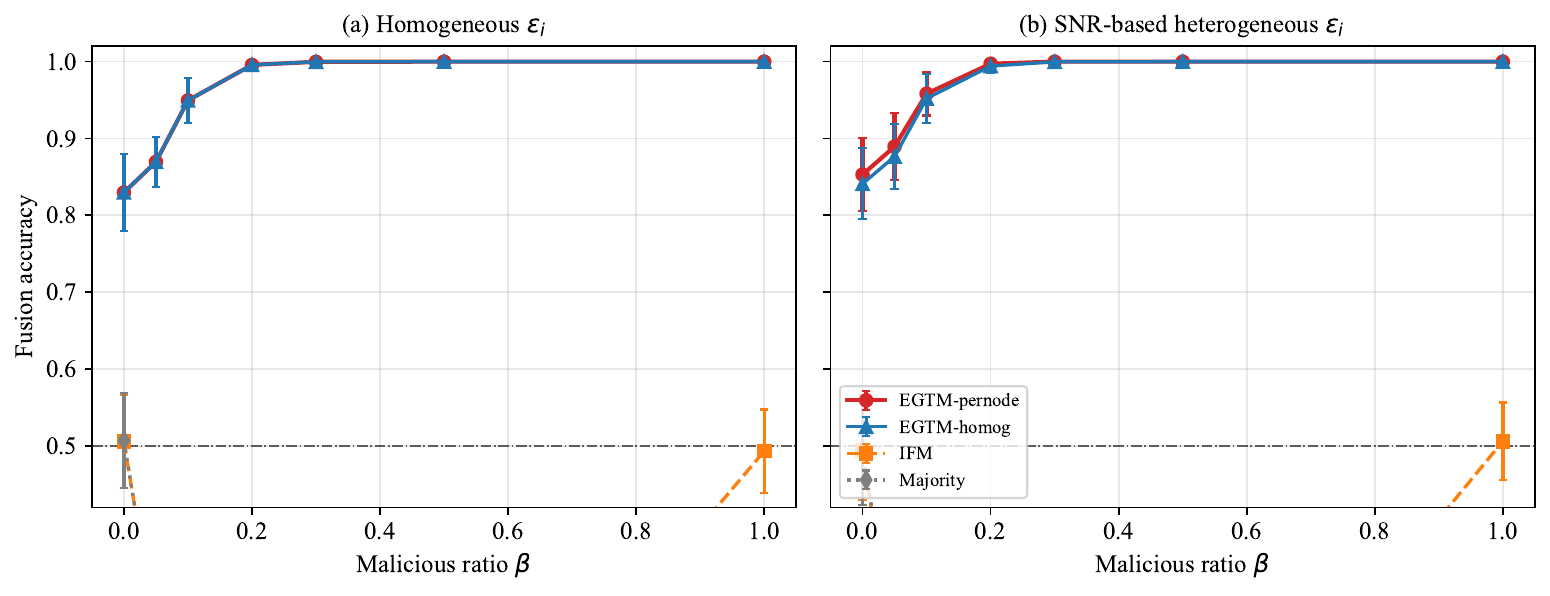}}
	\caption{Fusion accuracy versus malicious ratio $\beta$ under (a) homogeneous and (b) SNR-based heterogeneous sensing errors ($\eb=0.1$, $P_a=1.0$).}
	\label{fig13}
\end{figure}
\begin{figure*}[htbp]
	\centerline{\includegraphics[width=7.5in,keepaspectratio]{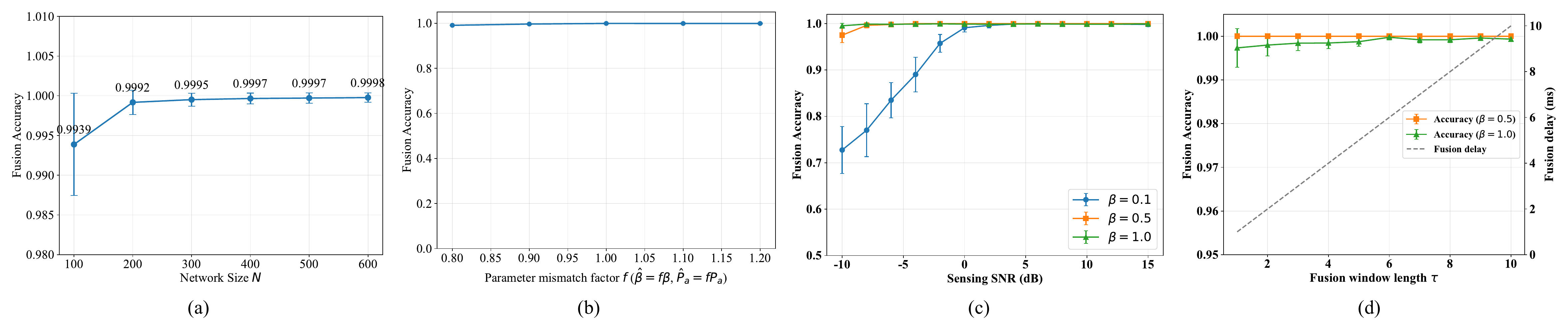}}
	\caption{Robustness and parameter sensitivity of the fusion performance. (a) Swarm size $N$; (b) parameter mismatch; (c) sensing SINR; (d) sliding-window length $\tau$.}
	\label{fig8}
\end{figure*}

\subsection{Robustness Analysis}
Fig. \ref{fig8} examines the robustness of the proposed fusion mechanism with respect to four classes of parameters.
Fig. \ref{fig8}(a) gives the accuracy of the proposed method under different $N$. Accuracy is close to $1.0$ at all scales and improves slightly with $N$, with reduced variance, reflecting the stabilizing effect of the law of large numbers; the method is applicable from small to large swarms.

The preceding experiments assume that the model parameters are known. Now let the fusion center use mismatched estimates $\hat\beta=f\beta$ and $\hat P_a=fP_a$ ($f\in[0.8,1.2]$), with $p_m^t$ predicted entirely by the theoretical evolution equation \eqref{eq:zeroth}. Fig. \ref{fig8}(b) shows that even under $\pm20\%$ mismatch the accuracy remains above $0.99$ ($0.9908$ at $f=0.8$); without mismatch, the ODE-prediction-based result ($0.9992$) is almost identical to the one using the true simulated $p_m^t$ ($0.9995$). The mechanism therefore relies on the dynamics model's characterization of how errors evolve, rather than on any omniscience assumption about the parameters.

Under additive white Gaussian noise, the bit error rate of binary detection is $\varepsilon=Q(\sqrt{2\gamma})$, which naturally maps the ISAC physical-layer SINR to the per-node sensing error. Fig. \ref{fig8}(c) plots the fusion accuracy against the SNR accordingly. Accuracy improves with the SNR and approaches $100\%$ above about $0$~dB for all $\beta$; in the low-SNR regime, a larger $\beta$ slightly improves accuracy because the known attack statistics of the malicious nodes become more distinctive. The proposed mechanism thus adapts to the wide SINR range of ISAC sensing channels.

Fig. \ref{fig8}(d) examines the window length $\tau$. The fusion latency grows linearly with $\tau$, while accuracy is almost insensitive to it: even at $\tau=1$, accuracy already reaches $\ge 0.99$ for $\beta\in\{0.5,1.0\}$, because the per-slot MAP decision already uses the online-predicted $\gamma^t$ and needs no long-window accumulation. The minimum window therefore yields the lowest latency without sacrificing accuracy, directly supporting low-latency situation inference in ISAC-6G swarms.

\section{Conclusion}\label{Sec8}
This paper investigated a critical vulnerability of imitation-based conformity cooperative sensing under Byzantine attacks in ISAC-enabled UAV swarms. Under the evolutionary-game framework, the conformity-aware evolution dynamics and ESS of the misinformation ratio were derived, and three theoretical results were established. First, under heterogeneous sensing errors, the per-node errors affect the zeroth-order ESS only through their mean, regardless of the shape of the distribution. Second, the first-order weak-selection closed-form correction shows that payoff-driven selection mildly suppresses swarm-level misinformation, and the accompanying exact mean-field fixed point holds for arbitrary selection intensity. Third, under DB imitation updating and a fixed attack intensity, swarm-level misinformation overwhelms the majority if and only if $P_a>1/2$, and this threshold is independent of both the sensing error and the malicious ratio. Embedding the predicted error dynamics into the MAP rule, the resulting fusion mechanism achieves nearly $100\%$ situation-inference accuracy under different topologies, attack intensities, network scales, and error distributions, and maintains accuracy above $99\%$ under $\pm20\%$ parameter mismatch; in contrast, majority voting, reputation weighting, and independent fusion collapse completely once the majority-flip threshold is crossed.

\bibliographystyle{IEEEtran}
\bibliography{MyRefs}

@article{AirGuard,
  author  = {Luo, H. and Chu, Z. and Zhang, T. and Zhao, C. and Lin, B. and Gao, F.},
  title   = {{AirGuard: UAV and bird recognition scheme for integrated sensing and communications system}},
  journal = {IEEE J. Sel. Areas Commun.},
  volume  = {44},
  pages   = {835--848},
  year    = {2025},
  doi     = {10.1109/JSAC.2025.3608760}
}

@article{RenPC,
title = {Integrated sensing and communication for low-altitude security},
journal = {Phys. Commun.},
volume = {76},
pages = {103117},
year = {2026},
issn = {1874-4907},
author = {Ruixing Ren and Junhui Zhao and Qingmiao Zhang}
}

@techreport{ITU2030,
  author      = {{International Telecommunication Union}},
  title       = {Framework and overall objectives of the future development of IMT for 2030 and beyond},
  institution = {ITU},
  type        = {Recommendation ITU-R M.2160-0},
  month       = nov,
  year        = {2023}
}

@article{ISACSurvey,
  author  = {Liu, F. and Cui, Y. and Masouros, C. and Xu, J. and Han, T. X. and Eldar, Y. C. and Buzzi, S.},
  title   = {{Integrated sensing and communications: Towards dual-functional wireless networks for 6G and beyond}},
  journal = {IEEE J. Sel. Areas Commun.},
  volume  = {40},
  number  = {6},
  pages   = {1728--1767},
  month   = jun,
  year    = {2022}
}

@article{Yao,
  author  = {Yao, Y. and Zhao, J. and Li, Z. and Cheng, X. and Wu, L.},
  title   = {{Jamming and eavesdropping defense scheme based on deep reinforcement learning in autonomous vehicle networks}},
  journal = {IEEE Trans. Inf. Forensics Security},
  volume  = {18},
  pages   = {1211--1224},
  year    = {2023},
  doi     = {10.1109/TIFS.2023.3236788}
}

@ARTICLE{RenTVTIoV,
  author={Zhao, Junhui and Ren, Ruixing and Zou, Dan and Zhang, Qingmiao and Xu, Wei},
  journal={IEEE Trans. Veh. Technol.}, 
  title={{IoV-Oriented Integrated Sensing, Computation, and Communication: System Design and Resource Allocation}}, 
  year={2024},
  month={Nov.},
  volume={73},
  number={11},
  pages={16283-16294}
}

@article{RenUAV,
  author  = {Ren, R. and Zhao, J. and Zhang, Q.},
  title   = {{UAV-assisted collaborative sensing task offloading and resource allocation in IoV}},
  journal = {IEEE Trans. Veh. Technol.},
  volume  = {75},
  number  = {4},
  pages   = {6806--6815},
  year    = {2026},
  doi     = {10.1109/TVT.2025.3623590}
}

@article{RenITS,
  author  = {Ren, R. and Zhao, J. and Zou, D. and Zhang, Q. and Wang, D. and Xu, W.},
  title   = {Collaborative computation in integrated sensing, communication, and computation system for autonomous driving},
  journal = {IEEE Trans. Intell. Transp. Syst.},
  volume  = {27},
  number  = {1},
  pages   = {883--894},
  year    = {2026},
  doi     = {10.1109/TITS.2025.3625173}
}

@article{RenIOTJ,
  author  = {Ren, R. and Zhao, J. and Zhang, Q. and Wang, D. and Li, J.},
  title   = {{DRL beamforming in RIS-aided IoV for integrated-sensing-communication-computation}},
  journal = {IEEE Internet Things J.},
  volume  = {12},
  pages   = {28201--28213},
  year    = {2025},
  doi     = {10.1109/JIOT.2025.3566380}
}

@article{Vicsek,
  author  = {Vicsek, T. and Czir{\'o}k, A. and Ben-Jacob, E. and Cohen, I. and Shochet, O.},
  title   = {Novel type of phase transition in a system of self-driven particles},
  journal = {Phys. Rev. Lett.},
  volume  = {75},
  number  = {6},
  pages   = {1226--1229},
  month   = aug,
  year    = {1995}
}

@article{MDCons,
  author  = {Xu, P. and Sun, X. and Yu, M. and Liu, J. and Bai, T.},
  title   = {{A consensus control method for unmanned aerial vehicle (UAV) swarm based on molecular dynamics}},
  journal = {Drones},
  volume  = {9},
  number  = {6},
  pages   = {Art. no. 404},
  year    = {2025},
  doi     = {10.3390/drones9060404}
}

@article{QCons,
  author  = {Zhu, S. and Chen, B.},
  title   = {Distributed detection in ad hoc networks through quantized consensus},
  journal = {IEEE Trans. Inf. Theory},
  volume  = {64},
  number  = {11},
  pages   = {7017--7030},
  month   = nov,
  year    = {2018}
}

@article{CSS,
  author  = {Chouhan, A. and Captain, K. and Parmar, A. and Patel, J.},
  title   = {Defending cooperative spectrum sensing from Byzantine attacks: An effective entropy-based weighted algorithm},
  journal = {IEEE Wireless Commun. Lett.},
  volume  = {12},
  number  = {12},
  pages   = {2063--2067},
  month   = dec,
  year    = {2023}
}

@article{DFalse,
  author  = {Quan, C. and Bulusu, S. and Geng, B. and Han, Y. S. and Sriranga, N. and Varshney, P. K.},
  title   = {On ordered transmission based distributed Gaussian shift-in-mean detection under Byzantine attacks},
  journal = {IEEE Trans. Signal Process.},
  volume  = {71},
  pages   = {3343--3356},
  year    = {2023},
  doi     = {10.1109/TSP.2023.3314270}
}

@article{PrivCons,
  author  = {Zhu, C. and Jiang, B. and Zhang, B. and Xu, H. and Wu, Y. and Xu, M.},
  title   = {{Privacy-preserving average consensus for swarm systems subject to DoS attacks}},
  journal = {IEEE Internet Things J.},
  volume  = {12},
  number  = {24},
  pages   = {53659--53669},
  year    = {2025},
  doi     = {10.1109/JIOT.2025.3617342}
}

@inproceedings{SwarmSpoof,
  author    = {Yao, Y. and Dash, P. and Pattabiraman, K.},
  title     = {May the swarm be with you: Sensor spoofing attacks against drone swarms},
  booktitle = {Proc. ACM SIGSAC Conf. Comput. Commun. Security (CCS)},
  pages     = {3511--3513},
  year      = {2022},
  doi       = {10.1145/3548606.3563535}
}

@article{WSPRT,
  author  = {Xiao, S. and Wu, J. and Lin, P. and Qiao, L. and Qiu, Z. and Su, M.},
  title   = {Reputation-based self-differential sequential mechanism for collaborative spectrum sensing against Byzantine attack in cognitive wireless sensor networks},
  journal = {IEEE Sensors Lett.},
  volume  = {8},
  number  = {10},
  pages   = {Art. no. 7501104},
  month   = oct,
  year    = {2024},
  doi     = {10.1109/LSENS.2024.3454708}
}

@article{BetaCSS,
  author  = {Wu, J. and Gao, S. and Teng, X. and Zhang, Z. and Dai, M. and Ge, H. and Cao, W.},
  title   = {Beta distribution function-based cooperative spectrum sensing against Byzantine attack in cognitive wireless sensor networks},
  journal = {IEEE Sensors Lett.},
  year    = {2024}
}

@phdthesis{LinPhD,
  author = {Lin, Y.},
  title  = {Analysis of public opinion security based on bounded rational behaviors in social network},
  school = {Tsinghua University},
  address = {Beijing, China},
  year   = {2025}
}

@article{Centola,
  author  = {Centola, D.},
  title   = {The spread of behavior in an online social network experiment},
  journal = {Science},
  volume  = {329},
  number  = {5996},
  pages   = {1194--1197},
  year    = {2010}
}

@article{Kucherov,
  author  = {Kucherov, D. P. and Jiang, G. and Liu, H. and Fu, M.},
  title   = {{UAV group control protocol with adaptive consensus}},
  journal = {Int. J. Adapt. Control Signal Process.},
  volume  = {38},
  number  = {9},
  pages   = {3177--3194},
  year    = {2024},
  doi     = {10.1002/acs.3868}
}

@article{HanCons,
  author  = {Han, F. and Liu, J. and Li, J. and Song, J. and Wang, M. and Zhang, Y.},
  title   = {Consensus control for multi-rate multi-agent systems with fading measurements: The dynamic event-triggered case},
  journal = {Syst. Sci. Control Eng.},
  volume  = {11},
  number  = {1},
  pages   = {Art. no. 2158959},
  year    = {2023},
  doi     = {10.1080/21642583.2022.2158959}
}

@article{WenSwarm,
  author  = {Wen, L. and Zhen, Z. and Tao, C. and Ding, J.},
  title   = {{Distributed cooperative strategy of UAV swarm without speed measurement under saturation attack mission}},
  journal = {IEEE Trans. Aerosp. Electron. Syst.},
  volume  = {60},
  number  = {4},
  pages   = {4518--4529},
  year    = {2024},
  doi     = {10.1109/TAES.2024.3382627}
}

@article{ShangPred,
  author  = {Shang, P. and Peng, Z. and He, H. and Li, T. and Liu, G.},
  title   = {{Trajectory prediction of dynamic UAV swarm with interaction and quantity uncertainty under saturation attack mission}},
  journal = {IEEE Trans. Aerosp. Electron. Syst.},
  volume  = {62},
  pages   = {8136--8152},
  year    = {2026},
  doi     = {10.1109/TAES.2026.3674892}
}

@article{BlockchainTrust,
  author  = {Zhao, J. and Huang, F. and Liao, L. and Zhang, Q.},
  title   = {Blockchain-based trust management model for vehicular ad hoc networks},
  journal = {IEEE Internet Things J.},
  volume  = {11},
  number  = {5},
  pages   = {8118--8132},
  year    = {2024},
  doi     = {10.1109/JIOT.2023.3318597}
}

@article{SpoofDetect,
  author  = {Bi, S. and Li, K. and Hu, S. and Ni, W. and Wang, C. and Wang, X.},
  title   = {{Detection and mitigation of position spoofing attacks on cooperative UAV swarm formations}},
  journal = {IEEE Trans. Inf. Forensics Security},
  volume  = {19},
  pages   = {1883--1895},
  year    = {2024},
  doi     = {10.1109/TIFS.2023.3341398}
}

@article{CuiResilient,
  author  = {Cui, Y. and Liang, Y. and Luo, Q. and Shu, Z. and Huang, T.},
  title   = {{Resilient consensus control of heterogeneous multi-UAV systems with leader of unknown input against Byzantine attacks}},
  journal = {IEEE Trans. Autom. Sci. Eng.},
  volume  = {22},
  pages   = {5388--5399},
  year    = {2024},
  doi     = {10.1109/TASE.2024.3420697}
}

@article{GongHIL,
  author  = {Gong, X. and Gui, J. and Chen, Y. and Yang, X. and Yu, W. and Huang, T.},
  title   = {{Resilient human-in-the-loop formation-tracking of multi-UAV systems against Byzantine attacks}},
  journal = {IEEE Trans. Autom. Sci. Eng.},
  volume  = {22},
  pages   = {3797--3809},
  year    = {2024},
  doi     = {10.1109/TASE.2024.3400155}
}

@article{GongTwoLayer,
  author  = {Gong, X. and Basin, M. V. and Feng, Z. G. and Huang, T. W. and Cui, Y. K.},
  title   = {{Resilient time-varying formation-tracking of multi-UAV systems against composite attacks: A two-layered framework}},
  journal = {IEEE/CAA J. Autom. Sinica},
  volume  = {10},
  number  = {4},
  pages   = {969--984},
  year    = {2023},
  doi     = {10.1109/JAS.2023.123339}
}

@article{LiAirSea,
  author  = {Li, J. and Liu, J. J. R. and Liu, C. and Baldi, S. and Yu, D.},
  title   = {{Resilient control under DoS attacks of hybrid air--sea swarms with cooperative--competitive interactions}},
  journal = {IEEE Trans. Aerosp. Electron. Syst.},
  volume  = {62},
  pages   = {2913--2927},
  year    = {2026},
  doi     = {10.1109/TAES.2025.3643810}
}

@article{WuZeroSum,
  author  = {Wu, Y. and Chen, M. and Chadli, M.},
  title   = {{Zero-sum-game-based distributed fuzzy adaptive self-triggered control of swarm UAVs under intermittent communication and DoS attacks}},
  journal = {IEEE Trans. Fuzzy Syst.},
  volume  = {32},
  number  = {9},
  pages   = {5371--5384},
  year    = {2024},
  doi     = {10.1109/TFUZZ.2024.3423709}
}

@article{DB,
  author  = {Ohtsuki, H. and Nowak, M. A.},
  title   = {The replicator equation on graphs},
  journal = {J. Theor. Biol.},
  volume  = {243},
  number  = {1},
  pages   = {86--97},
  year    = {2006}
}

@article{OhtsukiNature,
  author  = {Ohtsuki, H. and Hauert, C. and Lieberman, E. and Nowak, M. A.},
  title   = {A simple rule for the evolution of cooperation on graphs and social networks},
  journal = {Nature},
  volume  = {441},
  pages   = {502--505},
  year    = {2006}
}

@article{NowakFive,
  author  = {Nowak, M. A.},
  title   = {Five rules for the evolution of cooperation},
  journal = {Science},
  volume  = {314},
  number  = {5805},
  pages   = {1560--1563},
  year    = {2006}
}
\end{document}